\newcommand*{\addFileDependency}[1]{
  \typeout{(#1)}
  \@addtofilelist{#1}
  \IfFileExists{#1}{}{\typeout{No file #1.}}
}
\newcommand*{\myexternaldocument}[1]{
    \externaldocument{#1}
    \addFileDependency{#1.tex}
    \addFileDependency{#1.aux}
}
\newcommand{\blind}{1}
\newcommand{\be} {\begin{eqnarray*}}
\newcommand{\ee} {\end{eqnarray*}}
\newcommand{\ceil}[1]{\left\lceil #1 \right\rceil}
\newcommand{\Proj} {\mr{Q}}
\newcommand{\IP} {(\mr{I}-\mr{Q}_0)}
\newcommand{\bx} {{\bf x}}
\def\mr{\mathrm}
\newcommand{\Var}{\mathop{\rm Var}}
\newcommand{\norm}[1]{\big\Vert#1\big\Vert}
\newcommand\wt[1]{{ \widetilde{#1} }}
\newcommand \bbE{\mathbb{E}}
\def\T{{ \mathrm{\scriptscriptstyle T} }}
\def\Gauss{{ \mathrm{N} }}
\newcommand{\ind}{\mathbbm{1}}
\def\m{\mathcal}
\def\mb{\mathbb}
\renewcommand{\baselinestretch}{1.20}
\newtheorem{theorem}{Theorem}[section]
\newtheorem{lemma}[theorem]{Lemma}
\begin{document}
\def\spacingset#1{\renewcommand{\baselinestretch}%
{#1}\small\normalsize} \spacingset{1}


\if1\blind
{
  \title{\bf Functional Horseshoe Priors for Subspace Shrinkage}
  \author{
    Minsuk Shin
    
    Department of Statistics, Texas A\&M University\\
    
    Anirban Bhattachrya 
    
      Department of Statistics, Texas A\&M University\\
      
      and\\
    Valen E. Johnson 
    
      Department of Statistics, Texas A\&M University\\  
   }
   \date{}
  \maketitle
} \fi

\if0\blind
{
  \bigskip
  \bigskip
  \bigskip
  \begin{center}
    {\LARGE\bf Functional Horseshoe Priors for Subspace Shrinkage}
\end{center}
\date{}
  \medskip
} \fi

\bigskip
\begin{abstract}
 We introduce a new shrinkage prior on function spaces, called the functional horseshoe prior (fHS), that encourages shrinkage towards parametric classes of functions. Unlike other shrinkage priors for parametric models, the fHS shrinkage acts on the shape of the function rather than inducing sparsity on model parameters. We study the efficacy of the proposed approach by showing an adaptive posterior concentration property on the function. We also demonstrate consistency of  the model selection procedure that thresholds the shrinkage parameter of the functional horseshoe prior. We apply the fHS prior to nonparametric additive models and compare its performance with  procedures based on the standard horseshoe prior and several penalized likelihood approaches.  We find that the new procedure achieves smaller estimation error and more accurate model selection than other procedures in several simulated and real examples. The supplementary material for this article, which contains additional simulated and real data examples, MCMC diagnostics, and proofs of the theoretical results, is available online.
\end{abstract}

\noindent%
{\it Keywords:} Bayesian shrinkage; nonparametric regression; additive model; posterior contraction.
\vfill

\spacingset{1.45} 



%

\section{Introduction}
Since the seminal work of \citet{james1961estimation}, shrinkage estimation has been immensely successful in various statistical disciplines and continues to enjoy widespread attention. Many shrinkage estimators have a natural Bayesian flavor. For example, one obtains the ridge regression estimator as the posterior mean arising from an isotropic Gaussian prior on the vector of regression coefficients \citep{jeffries1961theory,hoerl1970ridge}. Along similar lines, an empirical Bayes interpretation of the positive part of the  James--Stein estimator can be obtained \citep{efron1973stein}. Such connections have been extended to the semiparametric regression context, with applications to smoothing splines and penalized splines \citep{wahba1990spline,ruppert2003semiparametric}. Over the past decade and a half, a number of second-generation shrinkage priors have appeared in the literature for application in high-dimensional sparse estimation problems. Such priors can be almost exclusively expressed as global-local scale mixtures of Gaussians \citep{polson2010shrink}; examples include 
the relevance vector machine \citep{tipping2001sparse}, normal/Jeffrey's prior \citep{bae2004gene}, the Bayesian lasso \citep{park2008bayesian,hans2009bayesian}, 
the horseshoe (HS) prior \citep{carvalho2010horseshoe}, normal/gamma and normal/inverse-Gaussian priors \citep{caron2008sparse,griffin2010inference}, generalized double Pareto priors 
\citep{armagan2013generalized} and Dirichlet--Laplace priors \citep{bhattacharya2015dirichlet}. These priors typically have a large spike near zero with heavy tails, thereby providing an approximation to the operating characteristics of sparsity inducing discrete mixture priors \citep{
george1997approaches,johnson2012bayesian}. For more on connections between Bayesian model averaging and shrinkage, refer to \cite{polson2010shrink}.

A key distinction between ridge-type shrinkage priors and the global-local priors is that while ridge-type priors typically shrink towards a fixed point--most commonly the origin--global-local priors shrink towards the union of subspaces consisting of sparse vectors. The degree of shrinkage to sparse models is  controlled by certain hyperparameters \citep{bhattacharya2015dirichlet}. In this article, we further enlarge the scope of shrinkage priors by proposing a class of functional shrinkage priors called  functional horseshoe (fHS) priors. fHS priors  facilitate shrinkage towards pre-specified subspaces. The shrinkage factor (defined in Section 3) is assigned a $\mbox{Beta}(a, b)$ prior with $a, b < 1$, which has the shape of a HS prior \citep{carvalho2010horseshoe}. While the HS prior  shrinks towards sparse vectors, the proposed fHS shrinks functions towards arbitrary subspaces. 

To illustrate the proposed methodology, consider a nonparametric regression model with unknown regression function $f : \m X \to \mb R$ given by
\begin{eqnarray}\label{eq:normal_mean}
Y = F + \varepsilon,  \quad \varepsilon \sim \Gauss(0, \sigma^2 \mr I_n),
\end{eqnarray}
where $Y = (y_1, \ldots, y_n)^{\T}$, $F = (f(x_1), \ldots, f(x_n))^{\T} = \bbE (Y \mid \bx)$,  and covariates $x_i \in \m X \subset \mb R$.  

In \eqref{eq:normal_mean}, one can either make parametric assumptions (e.g., linear or quadratic dependence on $x$) regarding the shape of $f$, or one may model it nonparametrically using splines, wavelets, Gaussian processes, etc. Scatter plots or goodness-of-fit tests can be used  to ascertain the validity of a linear or quadratic model in \eqref{eq:normal_mean}, but such procedures are only feasible in relatively simple settings. In relatively complex and/or high dimensional problems, there is clearly a need for an automatic data-driven procedure to adapt between models of varying complexity. With this motivation, we propose the fHS prior that encourages shrinkage towards a parametric class of models embedded inside a larger semiparametric model, as long as a suitable projection operator can be defined. The main difference between the fHS prior and the standard HS prior is that the fHS prior introduces a more general notion of shrinkage which operates on the shape of an unknown function rather than shrinking a vector of parameters to zero. We provide a more detailed discussion on this in Section 5.1.


The continuous nature of the prior allows development of a simple and efficient Gibbs sampler. As a consequence, the fHS procedure enjoys substantial computational advantages over traditional Bayesian model selection procedures based on mixtures of point mass priors, since they require computationally intensive search over large discrete model spaces. 

Our approach is not limited to univariate regression and can be extended to the varying coefficient model \citep{hastie1993varying}, density estimation via log-spline models \citep{kooperberg1991study} and additive models \citep{hastie1986generalized}, among others. Further details are provided in Section \ref{sec:sim}. In the additive regression context, the proposed approach performs comparably to state-of-the-art procedures like the {\it Sparse Additive Model} (SpAM) of \cite{ravikumar2009sparse} and the {\it High-dimensional Generalized Additive Model} (HGAM) of \cite{meier2009high}.


We provide theoretical justification for the method by showing an adaptive property of the approach. Specifically, we show that the posterior contracts \citep{ghosal2000convergence} at the parametric rate if the true function belongs to the pre-designated subspace, and contracts at the optimal rate for $\alpha$-smooth functions otherwise. In other words, our approach adapts to the parametric shape of the unknown function while allowing deviations from the parametric shape in a nonparametric fashion. In addition, we describe a model selection procedure obtained by thresholding the shrinkage factor,  and then demonstrate its consistency.

 \section{Preliminaries}\label{sec:pre}
We begin by introducing some notation. For $\alpha > 0$, let $\lfloor \alpha \rfloor$ denote the largest integer smaller than or equal to $\alpha$ and  $\ceil{\alpha}$ denote the smallest integer larger than or equal to $\alpha$. Let $C^{\alpha}[0,1]$ denote the H{\"o}lder class of $\alpha$ smooth functions on $[0,1]$ that have continuously differentiable derivatives up to order $\lfloor \alpha \rfloor$, with the $\lfloor \alpha \rfloor$th order derivative being Lipschitz continuous of order $\alpha - \lfloor \alpha \rfloor$. 
For a vector $x \in \mb R^d$, let $\norm{x}$ denote its Euclidean norm. For a function $g:[0,1] \to \mb R$ and points $x_1, \ldots, x_n \in [0,1]$, let $\norm{g}_{2,n}^2 = n^{-1} \sum_{i=1}^n g^2(x_i)$; we shall refer to $\norm{\cdot}_{2,n}$ as the empirical $L_2$ norm. For an $m \times d$ matrix $A$ with $m > d$ and $\mbox{rk}(A) = d$, let $\mathfrak{L}(A)=\{ A\beta : \beta \in \mb R^d \}$ denote the column space of $A$, which is a $d$-dimensional subspace of $\mb R^m$. Let $\Proj_A = A(A^{\T} A)^{-1} A^{\T}$ denote the projection matrix on $\mathfrak{L}(A)$.

\section{The functional horseshoe prior}\label{sec:main}

In the nonparametric regression model in \eqref{eq:normal_mean}, we model the unknown function $f$ as spanned by a set of pre-specified basis functions $\{\phi_j\}_{1\leq j\leq k_n}$ as follows:
\begin{eqnarray}\label{eq:ff}
f(x) = \sum_{j=1}^{k_n} \beta_j \phi_j(x).
\end{eqnarray}                                    
We work with the B-spline basis \citep{de2001practical} for illustrative purposes here. However, the methodology trivially  generalizes to a larger class of basis functions. A detailed description of the B-spline basis is provided in Section \ref{sec:suppBbasis} in the supplementary material. Let $\beta = (\beta_1, \ldots, \beta_{k_n})^{\T}$ denote the vector of basis coefficients and let $\Phi = \{\phi_j(X_i)\}_{1 \le i \le n, 1 \le j \le k_n}$ denote the $n \times k_n$ matrix of basis functions evaluated at the observed covariates. Model \eqref{eq:normal_mean} can then be expressed as 
\begin{eqnarray}\label{eq:basis_exp}
Y \mid \beta \sim \mbox{N}(\Phi \beta, \sigma^2 \mr I_n). 
\end{eqnarray}
%
A standard choice for a prior on $\beta$ is a $g$-prior, $\beta \sim \mbox{N}(0, g (\Phi^{\T} \Phi)^{-1})$\citep{zellner1986}. These priors are commonly used in linear models because they incorporate the correlation structure of the covariates inside the prior variance. The posterior mean of $\beta$ under a $g$-prior can be expressed as $\{ 1 - 1/(1+g)\} \widehat{\beta}$, where $\widehat{\beta} = \Proj_{\Phi} Y$ is the maximum likelihood estimate of $\beta$. Thus, the posterior mean shrinks the maximum likelihood estimator towards zero, with the amount of shrinkage controlled by the parameter $g$.
 \citet{bontemps2011bernstein} studied asymptotic properties of the resulting posterior by providing bounds on the total variation distance between the posterior distribution and a Gaussian distribution centered at the maximum likelihood estimator with the inverse Fisher information matrix as covariance. In \citet{bontemps2011bernstein}, the $g$ parameter was fixed {\em a priori} depending on the sample size $n$ and the error variance $\sigma^2$. In particular, the results of \citet{bontemps2011bernstein} imply minimax optimal posterior convergence for $\alpha$-smooth functions. In related work, \citet{ghosal2007convergence} established minimax optimality with isotropic Gaussian priors on $\beta$. 

 Our goal is to define a broader class of shrinkage priors on $\beta$ that facilitate shrinkage towards a {\em null subspace} that is fixed in advance, rather than shrinkage towards the origin or any other fixed {a priori} guess $\beta_0$. For example, if we have {\em a priori} belief that the function is likely to attain a linear shape, then we would like to impose shrinkage towards the class of linear functions. In general, our methodology allows shrinkage towards any null subspace spanned by the columns of a null regressor matrix $\Phi_0$, with $d_0 = \mbox{rank}(\Phi_0)$ equal to the dimension of the null space. For example in the linear case, we define the null space as $\mathfrak{L}(\Phi_0)$ with $\Phi_0 = \{{\bf 1},\bx\} \in \mb R^{n \times 2}$, where ${\bf 1}$ is a $n \times 1$ vector of ones and $d_0 = 2$. Shrinkage towards quadratic, or more generally polynomial, regression models is achieved similarly.  

With the above notation, we define the fHS prior through the following conditional specification: 
\begin{eqnarray}
\pi(\beta\mid\tau) &\propto& (\tau^2)^{-(k_n-d_0)/2}\exp\left\{  -\frac{1}{2\sigma^2\tau^2} \beta^\T\Phi^\T\IP\Phi\beta   \right\}, \label{eq:beta_pr} \\
\pi(\tau) &\propto& \frac{(\tau^2)^{b-1/2 }}{(1+\tau^2)^{(a+b)}} \ind_{(0, \infty)}(\tau), \label{eq:hyper}
\end{eqnarray}
where $a, b > 0$. Recall that $\Proj_0 = \Phi_0(\Phi_0^\T\Phi_0)^{-1}\Phi_0^\T$ denotes the projection matrix of $\Phi_0$. 

When $\Phi_0 = 0$, \eqref{eq:beta_pr} is equivalent to a $g$-prior with $g = \tau^2$. The key additional feature  in our proposed prior is the introduction of  the quantity $\IP$ in the exponent, which enables shrinkage towards subspaces rather than single points. Although the proposed prior may be singular, it follows from  subsequent results that the joint posterior on $(\beta, \tau^2)$ is proper. Note that the prior on the scale parameter $\tau$ follows a half-Cauchy distribution when $a = b = 1/2$. Half-Cauchy priors have been recommended as a default prior choice for global scale parameters in the linear regression framework \citep{polson2012half}.  Using the reparameterization $\omega = 1/(1 + \tau^2)$, the prior in \eqref{eq:hyper} can be interpreted as the prior induced on $\tau$ through  a $\mbox{Beta}(a, b)$ prior on $\omega$. We work in the $\omega$ parameterization for reasons to be evident shortly. 

Exploiting the conditional Gaussian specification, the conditional posterior of $\beta$ is also Gaussian, and can be expressed as 
\begin{align}\label{eq:beta_post}
\beta \mid Y, \omega \sim \mbox{N}( \wt{\beta}_{\omega}, \wt\Sigma_{\omega}),  
\end{align}
where
{\small \begin{align}\label{eq:beta_mv}
\wt{\beta}_{\omega} =  \left(\Phi^{\T} \Phi+ \frac{\omega}{1-\omega}\Phi^\T\IP\Phi\right)^{-1} \Phi^{\T} Y, \quad \wt\Sigma_{\omega} = \sigma^2  \left(\Phi^{\T} \Phi+\frac{\omega}{1-\omega}\Phi^\T\IP\Phi \right)^{-1}. 
\end{align}}
We now state a lemma which delineates the role of $\omega$ as the parameter controlling the shrinkage. 
\begin{lemma}\label{lem:postm}
Suppose that $\mathfrak{L}(\Phi_0) \subsetneq \mathfrak{L}(\Phi)$. Then,
\be
\bbE\left[\Phi\beta \mid Y,\omega  \right] = \Phi\wt\beta_\omega= (1-\omega)\Proj_{\Phi} Y + \omega \Proj_0 Y,
\ee
where $\Proj_\Phi$ is the projection matrix of $\Phi$.
\end{lemma}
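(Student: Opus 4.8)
The plan is to first invoke the conditional posterior \eqref{eq:beta_post}--\eqref{eq:beta_mv}, which already gives $\bbE[\Phi\beta\mid Y,\omega]=\Phi\wt\beta_\omega$, so that the task reduces to the purely linear-algebraic identity $\Phi M_\omega^{-1}\Phi^\T=(1-\omega)\Proj_\Phi+\omega\Proj_0$, where $M_\omega:=\Phi^\T\Phi+\frac{\omega}{1-\omega}\Phi^\T\IP\Phi$. Writing $\lambda=\omega/(1-\omega)$, note that $1/(1+\lambda)=1-\omega$ and $\lambda/(1+\lambda)=\omega$, so it suffices to prove $\Phi M_\omega^{-1}\Phi^\T=\frac{1}{1+\lambda}\Proj_\Phi+\frac{\lambda}{1+\lambda}\Proj_0$.

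The key idea is to diagonalize $M_\omega$ in a basis adapted to the nested subspaces. Since $\Phi$ has full column rank $k_n$ and $\mathfrak{L}(\Phi_0)\subsetneq\mathfrak{L}(\Phi)$, I would choose an $n\times k_n$ matrix $U$ with orthonormal columns spanning $\mathfrak{L}(\Phi)$ whose first $d_0$ columns, collected into $U_1$, span $\mathfrak{L}(\Phi_0)$; then $\Proj_\Phi=UU^\T$, $\Proj_0=U_1U_1^\T$, and $\Phi=UR$ for an invertible $k_n\times k_n$ matrix $R$. Because $U^\T U_1=(I_{d_0},0)^\T$, one obtains $\Phi^\T\Phi=R^\T R$ and $\Phi^\T\IP\Phi=R^\T(I_{k_n}-\Pi)R$ with $\Pi=\mathrm{diag}(I_{d_0},0)$, hence $M_\omega=R^\T\{(1+\lambda)I_{k_n}-\lambda\Pi\}R=R^\T D R$, where $D$ is diagonal with $d_0$ entries equal to $1$ and $k_n-d_0$ entries equal to $1+\lambda$. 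In particular $M_\omega$ is invertible and $\Phi M_\omega^{-1}\Phi^\T=U D^{-1}U^\T$.

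It then remains to decompose $D^{-1}=\frac{1}{1+\lambda}I_{k_n}+\frac{\lambda}{1+\lambda}\Pi$ and use $UU^\T=\Proj_\Phi$ together with $U\Pi U^\T=U_1U_1^\T=\Proj_0$ to conclude $\Phi M_\omega^{-1}\Phi^\T=(1-\omega)\Proj_\Phi+\omega\Proj_0$; applying both sides of this identity to $Y$ and recalling $\wt\beta_\omega=M_\omega^{-1}\Phi^\T Y$ gives $\Phi\wt\beta_\omega=(1-\omega)\Proj_\Phi Y+\omega\Proj_0 Y$, which is the claim.

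The argument is essentially bookkeeping, and the only points needing attention are the standing assumptions that make the inverses legitimate: $\mbox{rk}(\Phi)=k_n$ (so $R$, equivalently $\Phi^\T\Phi$, is nonsingular) and the \emph{strict} inclusion $\mathfrak{L}(\Phi_0)\subsetneq\mathfrak{L}(\Phi)$ (so that $d_0<k_n$, the block-structured $U$ exists, and $D$ is nonsingular). An equally valid and slightly shorter variant avoids the change of basis: write $\Phi_0=\Phi B$ with $B=(\Phi^\T\Phi)^{-1}\Phi^\T\Phi_0$ of full column rank $d_0$, guess $N=(1-\omega)(\Phi^\T\Phi)^{-1}+\omega B(B^\T\Phi^\T\Phi B)^{-1}B^\T$, verify $M_\omega N=I_{k_n}$ by direct multiplication using that $B(B^\T\Phi^\T\Phi B)^{-1}B^\T\Phi^\T\Phi$ is idempotent, and then read off $\Phi M_\omega^{-1}\Phi^\T=\Phi N\Phi^\T=(1-\omega)\Proj_\Phi+\omega\Proj_0$ since $\Phi B=\Phi_0$. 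I expect the diagonalization route to be the cleanest to present.
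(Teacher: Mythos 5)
Your proposal is correct and follows essentially the same route as the paper: the paper also reduces the claim to the identity $\Phi M_\omega^{-1}\Phi^\T=(1-\omega)\Proj_\Phi+\omega\Proj_0$ and proves it by passing to a basis $\wt\Phi=[\Phi_0,\Phi_1]$ adapted to the nested subspaces (via Gram--Schmidt, so that $\Phi_0^\T\Phi_1=0$), in which $M_\omega$ becomes block-diagonal with blocks $\Phi_0^\T\Phi_0$ and $(1-\omega)^{-1}\Phi_1^\T\Phi_1$ and the inverse is read off directly. Your orthonormal-$U$ diagonalization is the same idea in slightly more polished form, and your attention to the invertibility conditions is a minor but welcome addition over the paper's terser presentation.
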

This lemma shows that the conditional posterior mean of the regression function given $\omega$ is a convex combination of the classical B-spline estimator $\Proj_{\Phi} Y$ and the parametric estimator $\Proj_0 Y$. The parameter $\omega\in(0,1)$ controls the shrinkage effect; the closer $\omega$ is to $1$, the greater the shrinkage towards the parametric estimator. We learn the parameter $\omega$ from the data with a $\mbox{Beta}(a, b)$ prior on $\omega$. The hyperparameter $b < 1$ controls the amount of prior mass near one. 

 Figure \ref{fig:beta} illustrates the connection between  the choice of the hyperparameters $a$ and $b$ and the shrinkage behavior of the prior. The first and the second column in Figure \ref{fig:beta}, with  $a$ fixed at $1/2$, shows that the prior density of $\omega$ increasingly concentrates near $1$ as $b$ decreases from $1/2$ to $1/10$. The third column in Figure \ref{fig:beta} depicts the prior probability that $\omega>0.95$ and $\omega<0.05$. Clearly, as $b$ decreases, the amount of prior mass around one increases, which results in stronger shrinkage towards the parametric estimator. In particular, when $a= b = 1/2$, the resulting functional ``HS" prior density derives its name from the shape of the prior on $\omega$ \citep{carvalho2010horseshoe}. 
\begin{figure} 
\includegraphics[height=4.5cm, width=16cm]{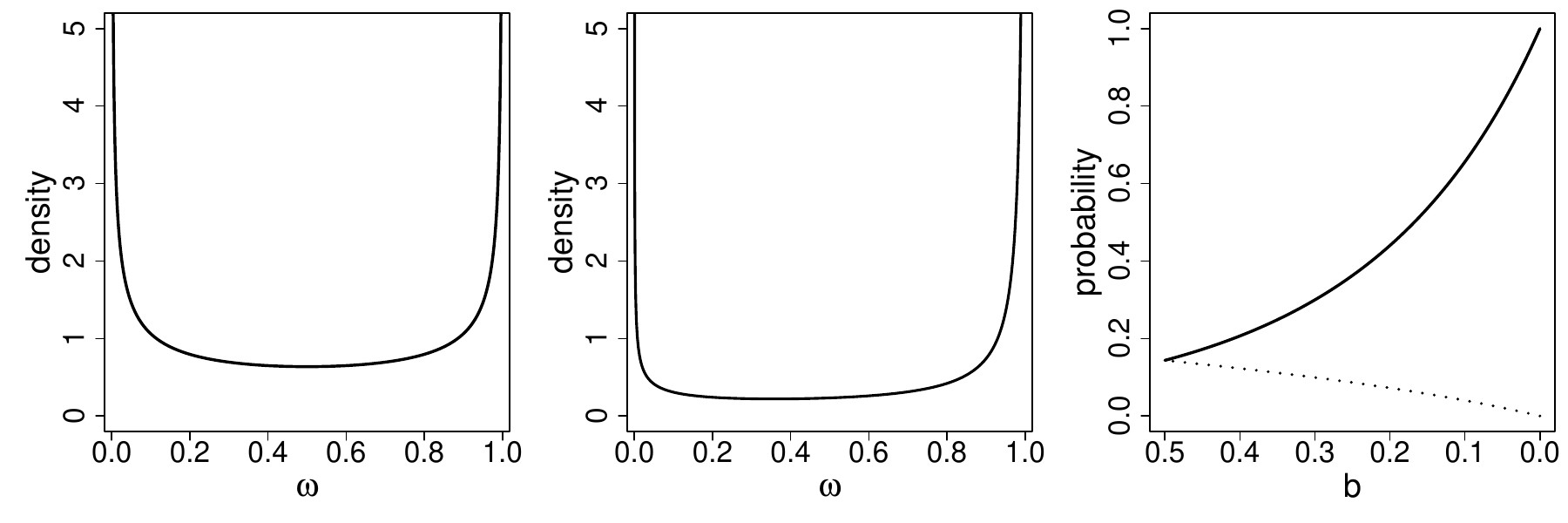}
\caption{The first two columns illustrate the prior density function of $\omega$ with different hyperparameters $(a,b)$:  $(1/2,1/2)$ for the first column and $(1/2,1/10)$ for the second column. The third column shows the prior probability that $\omega>0.95$ (solid line) and $\omega<0.05$ (dotted line) for varying $b$ and a fixed $a=1/2$.} 
\label{fig:beta}
\end{figure}

When $\mathfrak{L}(\Phi_0) \subsetneq \mathfrak{L}(\Phi)$, we can orthogonally decompose $\Proj_\Phi = \Proj_1 + \Proj_0$, where the columns of $\Proj_1$ are orthogonal to the columns of $\Proj_0$, i.e., $\Proj_1^{\T} \Proj_0 = 0$. For $\mathfrak{L}(\Phi_0)\subsetneq 
\mathfrak{L}(\Phi)$, this follows because we can use Gram-Schmidt orthogonalization to create $\wt{\Phi} = [\Phi_0; \Phi_1]$ of the same dimension as $\Phi$ with $\Phi_1^{\T} \Phi_0 = 0$ and $\mathfrak{L}(\Phi) = \mathfrak{L}(\wt{\Phi})$. Let $\Proj_{1}$ denote the projection matrix on $\mathfrak{L}(\Phi_1)$. Simple algebra shows that 
\begin{align}
 & \pi(\omega \mid Y) = \int \pi(\omega, \beta \mid Y) d \beta =   \frac{\pi(\omega)}{m(Y)} \int f(Y \mid \beta, \omega) \pi(\beta \mid \omega) d \beta \notag \\
& = \omega^{a+(k_n-d_0)/2-1}(1-\omega)^{b-1}\exp\{-H_n \omega \}/m(Y), \label{eq:omega}
\end{align}
where $H_n = Y^{\T} \Proj_{1} Y/(2\sigma^2)$ and $m(Y) = \int^1_0 \omega^{a+(k_n-d_0)/2-1}(1-\omega)^{b-1}\exp\left\{  -H_n\omega \right\} d\omega$.

To investigate the asymptotic behavior of the resulting posterior, it is crucial to find tight two-sided bounds on $m(Y)$. Such bounds are specified in Lemma \ref{lem:C}. 

\begin{lemma}{(Bounds on the normalizing constant)}\label{lem:C} Let $A_n$ and $B_n$ be  arbitrary sequences satisfying $A_n\to \infty$ as $n \to \infty$ and $B_n=O(1)$. Define $t_n = \int_0^1 \omega^{A_n-1}(1-\omega)^{B_n-1}\exp\{-H_n\omega\}d\omega$.  Then,
\be
\frac{\Gamma(A_n)\Gamma(B_n)}{\Gamma(A_n+B_n)}\exp\{-H_n\}(1+Q_n^L)  \leq t_n \leq \frac{\Gamma(A_n)\Gamma(B_n)}{\Gamma(A_n+B_n)}\exp\{-H_n\}(1+Q_n^U), 
\ee
where,  
\be
Q_n^U &=&  \frac{B_n}{A_n+B_n}\exp( H_n),\\
Q_n^L &=&  \frac{B_nH_n}{A_n+B_n} + \frac{DB_n(B_n+T_n)^{-A_n}}{(A_n+B_n)^{3/2}}\left(\exp\{H_n\} - 1 -H_n - (T_n+2)^{-1/2}\right)_+, 
\ee
where $T_n=\max\{A_n^2,  3\ceil{H_n}\}$ and   $D$ is some positive constant.
\end{lemma}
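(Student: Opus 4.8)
The plan is to first separate off the ``parametric'' part by writing
\[
t_n\;=\;e^{-H_n}\,\frac{\Gamma(A_n)\Gamma(B_n)}{\Gamma(A_n+B_n)}\;+\;R_n,\qquad R_n:=\int_0^1\omega^{A_n-1}(1-\omega)^{B_n-1}\bigl(e^{-H_n\omega}-e^{-H_n}\bigr)\,d\omega\;\ge\;0,
\]
using $e^{-H_n\omega}=e^{-H_n}+(e^{-H_n\omega}-e^{-H_n})$ and the Beta identity $\int_0^1\omega^{A_n-1}(1-\omega)^{B_n-1}d\omega=\Gamma(A_n)\Gamma(B_n)/\Gamma(A_n+B_n)$. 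Everything then reduces to two-sided control of $R_n$, and I would lean throughout on the companion identity $\int_0^1\omega^{A_n-1}(1-\omega)^{B_n}d\omega=\frac{B_n}{A_n+B_n}\,\Gamma(A_n)\Gamma(B_n)/\Gamma(A_n+B_n)$, which is exactly what turns elementary pointwise bounds on the integrand into the ratios $\frac{B_n}{A_n+B_n}$ and $\frac{B_nH_n}{A_n+B_n}$ appearing in $Q_n^U$ and in the leading part of $Q_n^L$.

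For the upper bound the key elementary fact is $e^{-H_n\omega}-e^{-H_n}\le 1-\omega$ on $[0,1]$, valid for every $H_n\ge 0$: the function $g(\omega):=(1-\omega)-e^{-H_n\omega}+e^{-H_n}$ has $g(0)=e^{-H_n}\ge 0$, $g(1)=0$, and $g'(\omega)=-1+H_n e^{-H_n\omega}$ strictly decreasing, so $g$ is unimodal on $[0,1]$ and hence nonnegative. Integrating, $R_n\le\int_0^1\omega^{A_n-1}(1-\omega)^{B_n}d\omega$, and the companion identity gives $t_n\le e^{-H_n}\frac{\Gamma(A_n)\Gamma(B_n)}{\Gamma(A_n+B_n)}\bigl(1+\frac{B_n}{A_n+B_n}e^{H_n}\bigr)$, which is the asserted bound.

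For the lower bound, write $e^{-H_n\omega}-e^{-H_n}=e^{-H_n}\bigl(e^{H_n(1-\omega)}-1\bigr)$ and split $e^{H_n(1-\omega)}-1=H_n(1-\omega)+\psi\bigl(H_n(1-\omega)\bigr)$ with $\psi(u):=e^u-1-u\ge 0$. The linear piece, integrated against the companion identity over all of $[0,1]$, produces the $\frac{B_nH_n}{A_n+B_n}$ term. Because $\psi\ge 0$, the $\psi$ piece may be lower bounded by restricting the integral to $[0,\delta_n]$ with $\delta_n:=1/(1+T_n)$, where monotonicity of $\psi$ gives $\psi(H_n(1-\omega))\ge\psi(H_n(1-\delta_n))$ and $\int_0^{\delta_n}\omega^{A_n-1}(1-\omega)^{B_n-1}d\omega\ge c\,(1+T_n)^{-A_n}/A_n$ for an absolute $c>0$; the net result is
\[
t_n\;\ge\;e^{-H_n}\frac{\Gamma(A_n)\Gamma(B_n)}{\Gamma(A_n+B_n)}\Bigl(1+\frac{B_nH_n}{A_n+B_n}\Bigr)\;+\;c\,e^{-H_n}\,\frac{(1+T_n)^{-A_n}}{A_n}\,\psi\bigl(H_n(1-\delta_n)\bigr).
\]
The added term is nonnegative, so this already settles the case where the second summand of $Q_n^L$ vanishes; otherwise it remains to show (a) $\psi(H_n(1-\delta_n))\ge c_0\bigl(e^{H_n}-1-H_n-(T_n+2)^{-1/2}\bigr)_+$ for an absolute $c_0>0$, and (b) $(1+T_n)^{-A_n}/A_n$ dominates $DB_n(B_n+T_n)^{-A_n}\frac{\Gamma(A_n)\Gamma(B_n)}{\Gamma(A_n+B_n)}(A_n+B_n)^{-3/2}$ for small $D$. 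Claim (b) is routine: $T_n\ge A_n^2$ keeps $\{(1+T_n)/(B_n+T_n)\}^{A_n}$ bounded, so $(1+T_n)^{-A_n}$ and $(B_n+T_n)^{-A_n}$ are comparable; $B_n\frac{\Gamma(A_n)\Gamma(B_n)}{\Gamma(A_n+B_n)}=(A_n+B_n)\int_0^1\omega^{A_n-1}(1-\omega)^{B_n}d\omega\le(A_n+B_n)/A_n=O(1)$; and the leftover factor $A_n^{1/2}$ (and $c$) are absorbed into $D$.

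The real obstacle is claim (a), because the comparison of $\psi(H_n(1-\delta_n))$ with $\psi(H_n)=e^{H_n}-1-H_n$ must hold uniformly across both regimes of $T_n=\max\{A_n^2,3\ceil{H_n}\}$. By the mean value theorem, $\psi(H_n)-\psi(H_n(1-\delta_n))=(e^\xi-1)H_n\delta_n<e^{H_n}H_n/(1+T_n)$ for some $\xi<H_n$, hence $\psi(H_n(1-\delta_n))>\psi(H_n)-e^{H_n}H_n/(1+T_n)$. If $H_n\le 2$ then $T_n=A_n^2$ for all large $n$, so $e^{H_n}H_n/(1+T_n)=O(A_n^{-2})$, which is eventually at most $(T_n+2)^{-1/2}$ (itself of order $A_n^{-1}$); with $\psi\ge 0$ this yields $\psi(H_n(1-\delta_n))\ge(\psi(H_n)-(T_n+2)^{-1/2})_+$. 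If $H_n>2$ then $H_n/(1+T_n)<1/3$ since $T_n\ge 3H_n$, so $\psi(H_n(1-\delta_n))>\tfrac{2}{3}e^{H_n}-1-H_n\ge\tfrac{1}{6}e^{H_n}$ (using $\tfrac{1}{2}e^{H_n}\ge 1+H_n$ for $H_n\ge 2$), which is $\ge\tfrac{1}{6}\bigl(e^{H_n}-1-H_n-(T_n+2)^{-1/2}\bigr)_+$. Taking $c_0=\tfrac{1}{6}$ and folding it, the $A_n^{1/2}$ slack, and $c$ into $D$ completes the proof. The only genuinely delicate accounting is that $(T_n+2)^{-1/2}$ --- of order $A_n^{-1}$ precisely because $T_n\ge A_n^2$ --- must absorb the $O(A_n^{-2})$ correction in the bounded-$H_n$ regime, which is exactly why the exponent $-1/2$ (rather than, say, $-1$) appears in the statement; apart from Beta-function identities the argument uses only the mean value theorem and $e^u\ge 1+u$.
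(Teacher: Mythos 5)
Your proof is correct, but it takes a genuinely different route from the paper's. The paper's proof starts from the confluent hypergeometric series representation $t_n = \frac{\Gamma(A_n)\Gamma(B_n)}{\Gamma(A_n+B_n)}e^{-H_n}\sum_{m\ge 0}\frac{(B_n)_{(m)}}{(A_n+B_n)_{(m)}}\frac{H_n^m}{m!}$ (quoted from Polson and Scott, 2012, via Kummer's transformation), gets the upper bound from $(B_n)_{(m)}/(A_n+B_n)_{(m)}\le B_n/(A_n+B_n)$ for $m\ge 1$, and obtains the lower bound by truncating the series at $m=T_n$ and controlling the Pochhammer ratios with Stirling's approximation and its Lemma A.1; the factors $(B_n+T_n)^{-A_n}$, $(A_n+B_n)^{-3/2}$ and the correction $(T_n+2)^{-1/2}$ all emerge from that term-by-term analysis. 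You instead work directly with the integral: exact Beta identities isolate what amount to the $m=0$ and $m=1$ contributions (producing $1$ and $B_nH_n/(A_n+B_n)$ exactly), the pointwise inequality $e^{-H_n\omega}-e^{-H_n}\le 1-\omega$ (by concavity) gives the upper bound in one line, and the remainder $\psi\bigl(H_n(1-\omega)\bigr)$ is handled by localizing the integral to $[0,1/(1+T_n)]$, which is the integral analogue of the paper's series truncation and is the common source of the $(B_n+T_n)^{-A_n}$ and $(T_n+2)^{-1/2}$ factors. Your argument is more elementary and self-contained, using only Beta identities, convexity/concavity, and the mean value theorem; your two-regime verification that $\psi\bigl(H_n(1-\delta_n)\bigr)\ge c_0\bigl(e^{H_n}-1-H_n-(T_n+2)^{-1/2}\bigr)_+$ (exploiting $T_n\ge A_n^2$ when $H_n$ is bounded and $T_n\ge 3H_n$ otherwise) is a clean substitute for the paper's Stirling-based step, and the bookkeeping $B_n\Gamma(A_n)\Gamma(B_n)/\Gamma(A_n+B_n)\le (A_n+B_n)/A_n$ together with the comparability of $(1+T_n)^{-A_n}$ and $(B_n+T_n)^{-A_n}$ (since $A_n/T_n\le A_n^{-1}$) correctly closes the gap up to the constant $D$. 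I find no errors; each approach yields the same two-sided bound, with yours avoiding special-function machinery and the paper's making the connection to the Kummer function explicit.
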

By setting $A_n=a+k_n/2$ and $B_n = b$, Lemma \ref{lem:C} shows that the magnitude of the normalizing constant $m(Y)$ in \eqref{eq:omega}  is determined by an interplay between the relative sizes of $b$ and $\exp(H_n)$. When $b$ is small enough so that  $b\exp(H_n)\approx 0$, it follows that $m(Y) \approx \mbox{Be}(a+k_n/2,b) \exp(-H_n)$, where $\mbox{Be}(\cdot,\cdot)$ denotes the beta function. Otherwise, ignoring polynomial terms, $m(Y) \approx Be(a+k_n/2,b)b$. This asymptotic behavior of $m(Y)$ is central to identifying the posterior contraction rate of the fHS prior. We also note that  the magnitude of $a$ asymptotically does not affect the strength of shrinkage for large $n$ as long as $a$ is a fixed constant, since  the prior contribution $\omega^{a-1}$ is dominated by the likelihood contribution $\omega^{k_n/2}$.

\subsection{  Posterior concentration rate}\label{sec:post}

We assume a set of standard regularity conditions that have been used by others (\cite{zhou1998local}, \cite{claeskens2009asymptotic}) to prove minimax optimality of B-spline estimators. These regularity conditions are described in Section \ref{sec:suppBbasis} in the supplementary material.
Under the regularity conditions, \cite{zhou1998local} showed that the mean square error of the B-spline estimator $\Proj_\Phi Y$ achieves the minimax optimal rate. If the true function $f_0 \in C^{\alpha}[0, 1]$ is $\alpha$-smooth and the number of basis functions $k_n \asymp n^{1/(2 \alpha + 1)}$, then they  showed that
\begin{eqnarray}
\bbE_0\left[ \norm{\Proj_\Phi Y - F_0}_{2,n}^2\right] = O\left(n^{-2\alpha/(1+2\alpha)}\right),\label{eq:classic}
\end{eqnarray}
where $\bbE_0(\cdot)$ represents an expectation with respect to the true data generating distribution of $Y$. We now state our main result on the posterior contraction rate of the fHS prior. 
\begin{theorem}
\label{theo:post}
Consider the model \eqref{eq:normal_mean} equipped with the fHS prior \eqref{eq:beta_pr}-\eqref{eq:hyper}. Assume $\mathfrak{L}(\Phi_0) \subsetneq \mathfrak{L}(\Phi)$. Further assume that for some integer $\alpha \ge 1$, the true regression function $f_0\in C^\alpha[0,1]$ and the B-spline basis functions $\Phi$ are constructed with  $k_n-\lfloor\alpha\rfloor$ knots and $\lfloor\alpha\rfloor-1$ degree, where $k_n \asymp n^{1/(1+2\alpha)}$. Suppose that 
the prior hyperparameters $a$ and $b$ in \eqref{eq:hyper} satisfy $a \in (\delta, 1 - \delta)$ for some constant $\delta \in (0, 1/2)$, and $k_n\log k_n\prec -\log b \prec (n k_n)^{1/2}$. Then, for any diverging sequence $\zeta_n$, 
$\bbE_0[ P\{\norm{\Phi\beta - F_0}_{2,n}
>M_n(f_0)^{1/2} \mid Y \} ] = o(1)$,
where 
\begin{eqnarray*}
	M_n(f_0) =\begin{cases}
    \zeta_n n^{-1},\mbox{if $F_0 \in \mathfrak{L}(\Phi_0)$}\\
	\zeta_n n^{-2\alpha/(1+2\alpha)}\log n,\mbox{ if $F_0^\T\IP F_0 \asymp n$}.
   \end{cases}
\end{eqnarray*}
\end{theorem}
Theorem \ref{theo:post} exhibits an adaptive property of the fHS prior. If the true function is $\alpha$-smooth, then the posterior contracts around the true function at the near minimax rate of $n^{-\alpha/(2\alpha + 1)} \log n$. However, if the true function $f_0$ belongs to the finite dimensional subspace $\mathfrak{L}(\Phi_0)$, then the posterior contracts around $f_0$ in the empirical $L_2$ norm at the parametric $n^{-1/2}$ rate. We note that the bound $k_n \log k_n \prec -\log b \prec (n k_n)^{1/2}$ is a key to the adaptivity of the posterior, since the strength of the shrinkage  towards $\mathfrak{L}(\Phi_0)$ is controlled by $b$. If $-\log b \prec k_n \log k_n$, then the shrinkage  towards $\mathfrak{L}(\Phi_0)$ is too weak to achieve the parametric rate when $F_0\in\mathfrak{L}(\Phi_0)$. On the other hand, if  $-\log b \succ (nk_n)^{1/2}$, the resulting posterior distribution strongly concentrates around  $\mathfrak{L}(\Phi_0)$ and fails to attain the optimal nonparametric rate of posterior contraction when $F_0\not\in\mathfrak{L}(\Phi_0)$.   
 
  We ignore the subspace of functions such that $\{F \in \mathbb{R}^n: F^\T \IP F = o(n),\:\:F\not\in \mathfrak{L}(\Phi_0)\}$ and  only focus on functions that can be strictly separated from the null space $\mathfrak{L}(\Phi_0)$. However, we acknowledge that it would be useful to illustrate the shrinkage behavior when the regression function $f$  approaches the null space under the condition that $\lim_{n \to \infty} F^\T \IP F/n = 0$.

\subsection{Model selection procedure and its consistency}\label{sec:modelsel}
In this section, we illustrate a model selection procedure based on the fHS priors and examine their theoretical consistency. As mentioned in Lemma \ref{lem:postm}, $\omega$ can be interpreted as the amount of weight that the posterior mean for function $F$ places on the parametric estimator $Q_0 Y$. Due to this fact, it is natural to consider a model selection procedure by thresholding the posterior mean of $\omega$  analogous to the model selection procedure considered in  \cite{carvalho2010horseshoe} for the standard HS prior. Since  a posterior mean of $\omega$ that is larger than $1/2$ indicates that more weight is imposed on the parametric estimator compared to the amount of the weight on the nonparametric estimator, it is natural to select the parametric model when $E(\omega\mid Y)>1/2$.  


  The asymptotic properties of such a thresholding based model selection procedure depends on the behavior of $\omega$ {\em a posteriori}.  The following theorem states the posterior convergence rate of $\omega$ when the true function belongs to the parametric or nonparametric family.     
                                                                                                                                                                                                                                                                                                                                                                                                                                                                                                                                                                                                                                                                                                                                                                                                                                                                                                                                                                                                                                                                                                                                                                                                                                                                                                                                                                                                                                                                                                                                                                                                                                                                                                                                                                                                                                                                                                                                                                                                                                                                                                                                                                                                                                                                                                                                                                                                                                                                                                                                                                                                                                                                                                                                                                                                                                                                                                                                                                                                                                                                                                                                                                                                                                                                                                                                                                                                                                                                                                                                                                                                                                                                                                                                                                                                                                                                                                                                                                                                                                                                                                                                                                                                                                                                                                                                                                                                                                                                                                                                                                                                                                                                                                                                                                                                                                                                                                                                                                                                                                                                                                                                                                                                                                                                                                                                                                                                                                                                                                                                                                                                                                                                                                                                                                                                                                                         
\begin{theorem}{\textnormal{ (posterior convergence rate of $\omega$)}}\label{theo:modelsel}
Assume conditions from Theorem \ref{theo:post} hold. Then, for any diverging sequence $\zeta_n$ and any constant $\epsilon_0>0$, $
E_0\left[ P(\omega < 1- \zeta_n S_{0,n}\mid Y) \right]= o(1)$ if $F_0\in \mathfrak{L}(\Phi_0)$,	and
 $E_0\left[ P(\omega >  \zeta_n S_{1,n}\mid Y) \right] = o(1)$  if $F_0^\T\IP F_0 \asymp n $, where $S_{0,n} = k_n^{-1}b^{1-\epsilon_0}$ and $S_{1,n}= (-\log b)/n$.
 \end{theorem}
 
 Theorem \ref{theo:modelsel} indicates that when the true function is parametric, the posterior distribution of $\omega$ contracts towards $1$ at a rate of at least $k_n^{-1}b^{1-\epsilon_0}$ for any $\epsilon_0>0$. On the other hand, when the true function is strictly separated from the class of parametric functions, i.e., $F_0^\T\IP F_0 \asymp n $, the posterior distribution of $\omega$ converges to zero at a rate of $-\log b/n$. By the condition  $k_n\log k_n\prec -\log b \prec (n k_n)^{1/2}$ in Theorem \ref{theo:post}, both $k_n^{-1}b^{1-\epsilon_0}$ and $-\log b /n$ converge to zero. These results guarantee the consistency of the model selection procedure  based on thresholding $E(\omega\mid Y)$ by any value in $(0,1)$.
 

\section{Examples for the univariate case}\label{sec:sim}
In this section, we consider some applications of the fHS prior for several nonparametric models:  
\begin{eqnarray}
\mbox{(i) simple regression model:}&& Y_i = f(x_i) + \epsilon_i \label{eq:reg}\\
\mbox{(ii) varying coefficient model:}&\:& Y_i = t_if(x_i) +\epsilon_i\label{eq:vc}\\
\mbox{(iii) density function estimation:}&\:& p(Y_i) = \frac{\exp\{ f(Y_i) \} }{\int \exp\{ f(t) \}dt }, \label{eq:density}
\end{eqnarray}
In cases (i) and (ii), we assume that $\epsilon_i\overset{i.i.d}\sim\Gauss(0,\sigma^2)$ for $i=1,\dots,n$. In (ii), $t_i$ and $x_i$ are covariates for $i=1,\dots,n$. In (iii), $p(\cdot)$ is the unknown density function of $Y$. The varying coefficient model \citep{hastie1993varying} in \eqref{eq:vc} reduces to a linear model when the coefficient function $f$ is constant, and the density function $p$ is Gaussian when the log-density function $f$ is quadratic in the log-spline model \eqref{eq:density} \citep{kooperberg1991study}. These facts motivate the use of the fHS prior in these examples to shrink towards the respective parametric alternatives. 

\begin{figure}
\includegraphics[height=4.5cm, width=15cm]{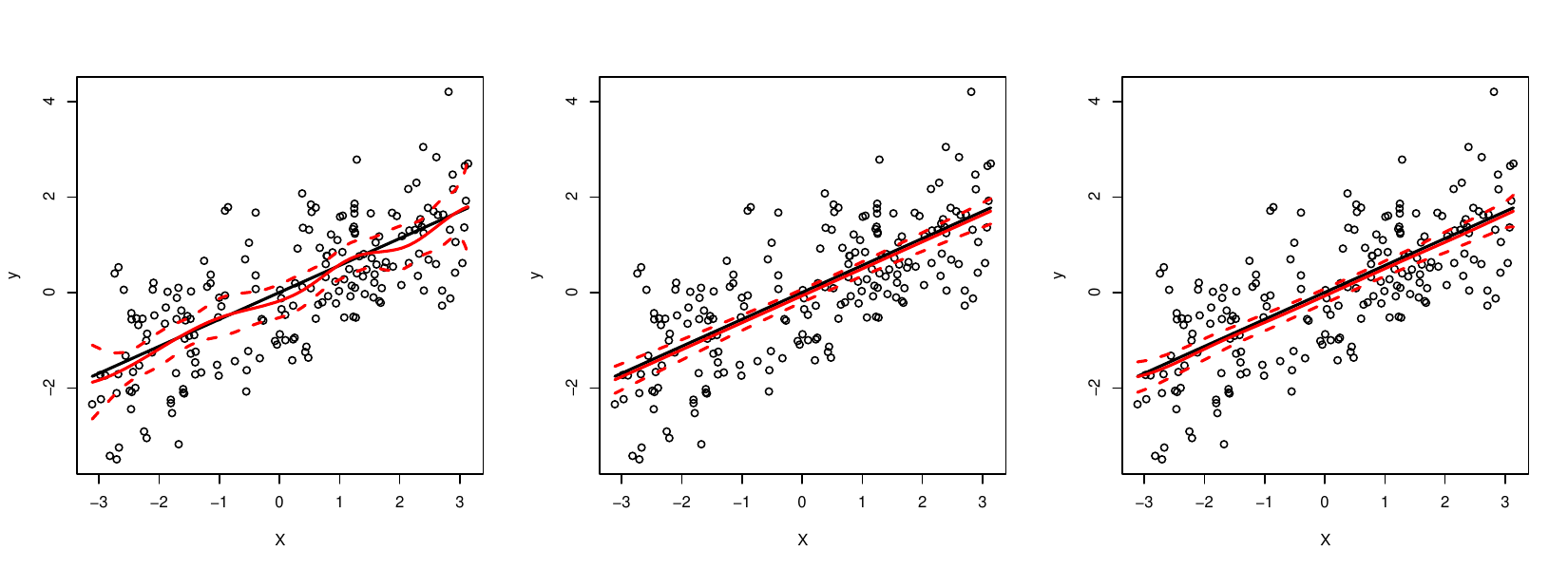}
\includegraphics[height=4.5cm, width=15cm]{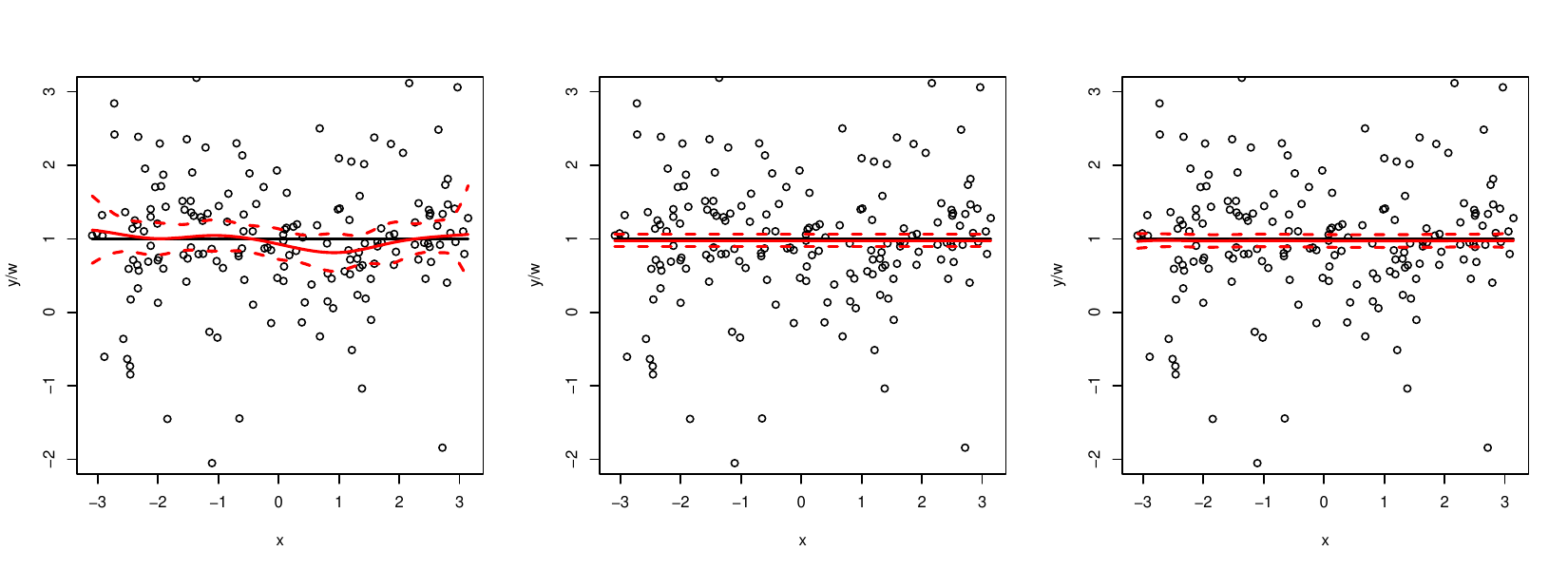}
\includegraphics[height=4.5cm, width=15cm]{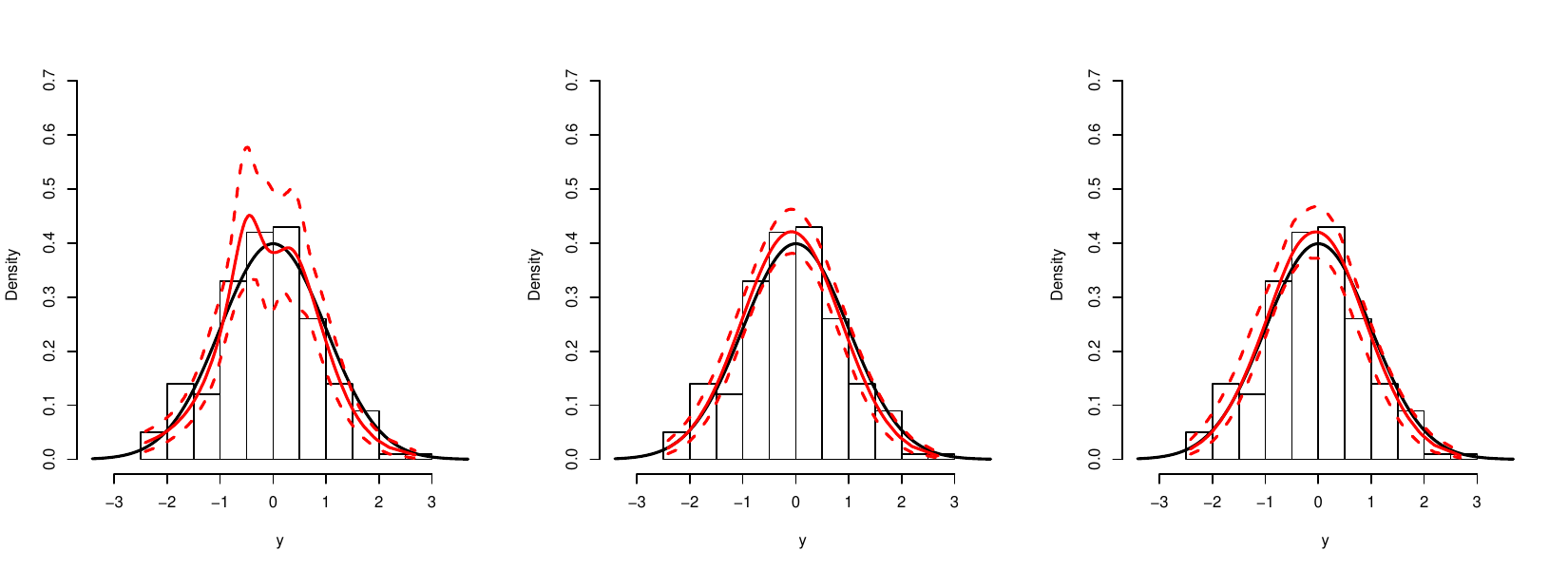}
\caption{Examples when the underlying true functions are  parametric. Posterior mean of each procedure (red solid), its 95\% pointwise credible bands (red dashed), and the true function (black solid) from a single example with $n=200$ for each model. The top row is for the simple regression model; the second row is for the varying coefficient model; the last row is for density estimation. The Bayesian B-spline procedure, the Bayesian parametric model procedure, and fHS priors are illustrated in the first, second, and third columns, respectively.  } 
\label{fig:example}
\end{figure}

\begin{figure}
\includegraphics[height=4.5cm, width=15cm]{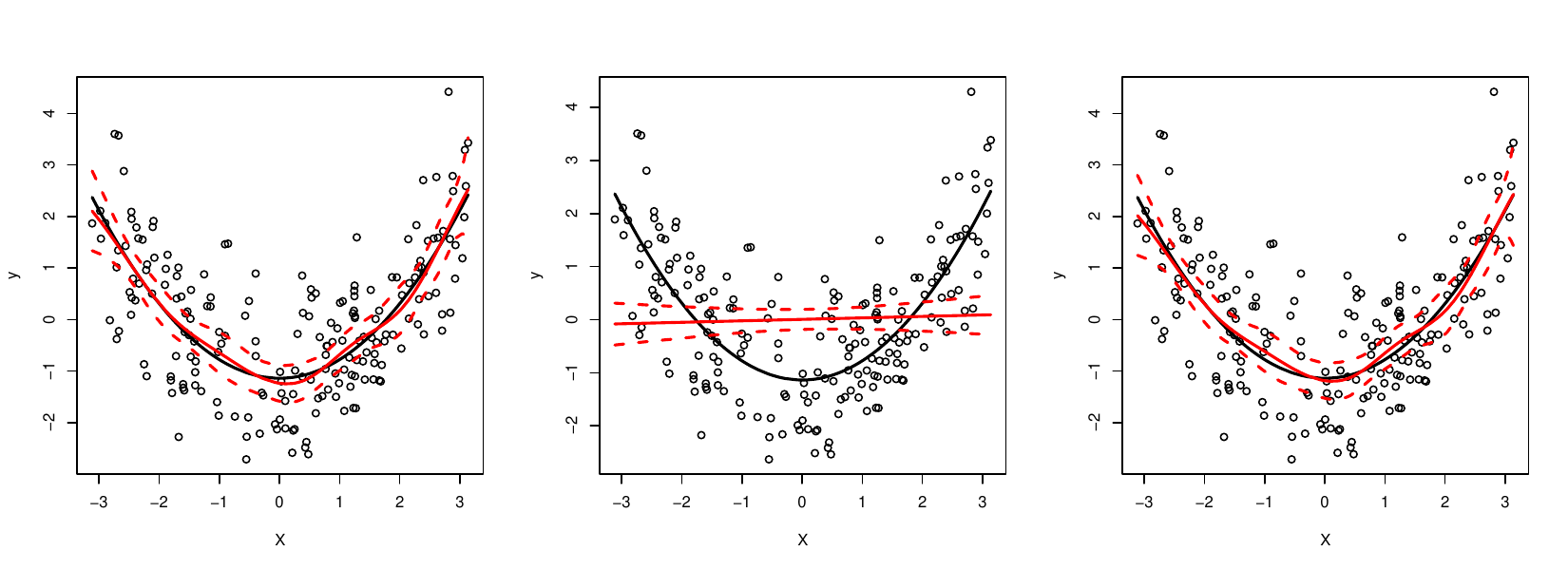}
\includegraphics[height=4.5cm, width=15cm]{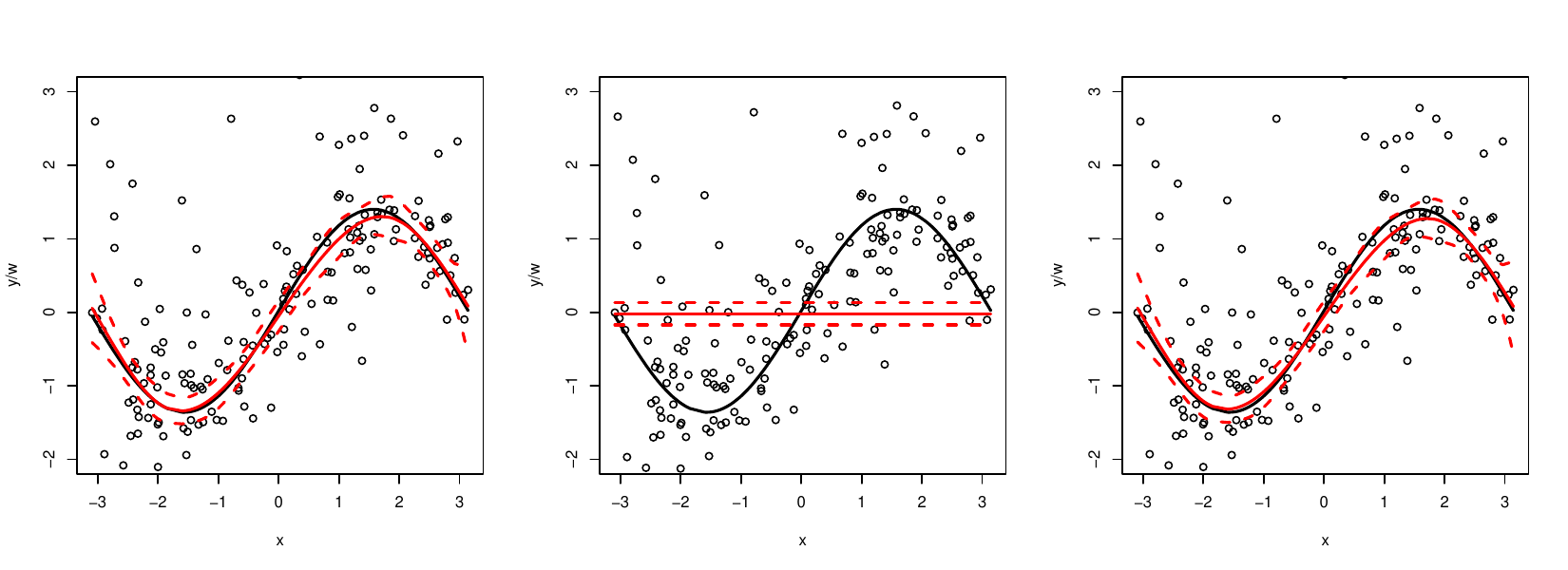}
\includegraphics[height=4.5cm, width=15cm]{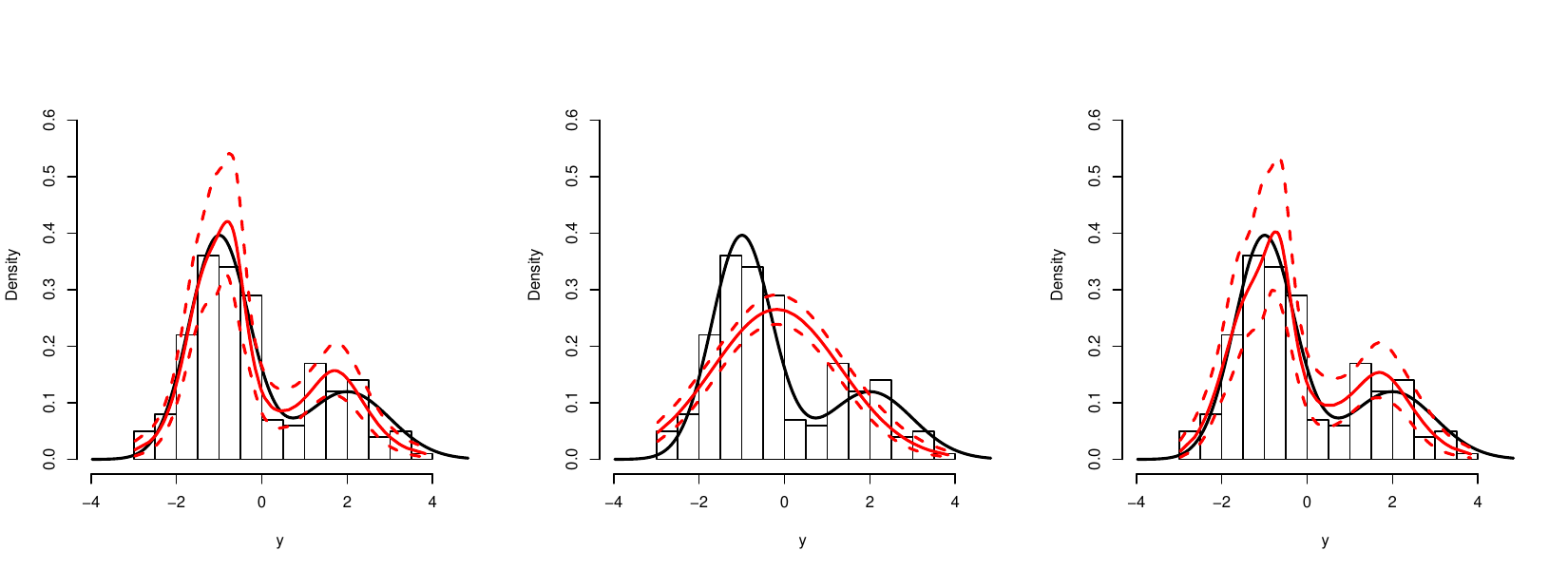}
\caption{ Examples when the underlying true functions are nonparametric. The description of the figures are provided in the caption of Figure \ref{fig:example}. } 
\label{fig:example2}
\end{figure}

Before providing a detailed simulation study, we illustrate in Figures \ref{fig:example} and \ref{fig:example2} what we generally expect from the fHS procedure. Figure \ref{fig:example} depicts the point estimate (posterior mean) and pointwise 95\% credible bands for the unknown function $f$ for a single data set for each of the three examples when the true function belongs to the parametric class. That is, a linear function in \eqref{eq:reg}, a constant function in \eqref{eq:vc}, and a quadratic function in \eqref{eq:density}. Figure \ref{fig:example2} depicts the corresponding estimates when the data generating function does not fall into  the assumed parametric class. It is evident from Figure \ref{fig:example} that when the parametric assumptions are met, the fHS prior performs similarly to the parametric model. This fact empirically corroborates our findings in Theorem \ref{theo:post} that the posterior contracts at a nearly parametric rate when the parametric assumptions are met. It is also evident that the fHS procedure automatically adapts to deviations from the parametric assumptions in Figure \ref{fig:example2}, again confirming the conclusion of Theorem \ref{theo:post}. That is, when the true function is well-separated from the parametric class, the posterior concentrates at a near optimal minimax rate. We reiterate that the same hyperparameters $a=1/2$ and $b=\exp\{-k_n\log n /2\}$ for the fHS prior were used in the examples in Figure \ref{fig:example} and Figure \ref{fig:example2}. 

We now provide the details of a replicated study for the simple regression model. The details for the varying coefficient model and the log-density model are provided in Section B of the supplementary documents, with the overall message consistent across the different problems. We generated the covariates independently from a uniform distribution between $-\pi$ and $\pi$ and set the error variance $\sigma^2 = 1$. We considered three parametric choices for $f$. These  include  linear, quadratic, and sinusoidal functions. We standardized the true function so as to obtain a signal-to-noise ratio of $1.0$.

To shrink the regression function in \eqref{eq:reg} towards linear subspaces for the simple regression model, we set $\Phi_0= \{{\bf 1}, \bx \}$ in the fHS prior \eqref{eq:beta_pr} ($\Phi_0= \{{\bf 1} \}$ for the varying coefficient model and $\Phi_0= \{{\bf 1}, \bx,\bx^2 \}$ for the log-density model).  An inverse-gamma prior with parameters $(1/100,1/100)$ was imposed on $\sigma^2$ for the fHS prior, and we set $b = \exp\{ -k_n\log n/2 \}$ to satisfy the conditions of Theorem \ref{theo:post}.  We  arbitrarily set $a=1/2$. We consider the number of basis functions $k_n\in\{5,8,11,35\}$. In particular, the choice $k_n=35$ was empirically recommended when $n>140$ in \cite{ruppert2003semiparametric}.   

To compare the fHS prior to the standard horseshoe (HS) prior, we considered a decomposition $F = F_0 + F_1$, where $F_0$ is the parametric function and $F_1 = \Phi\beta$ is the nonparametric component modeled by the B-spline basis functions. The parametric form $F_0$ is set to be linear. For a performance comparison to our procedure, we imposed the standard HS prior on the coefficients of the B-spline basis functions to encourage shrinkage of the nonparametric part towards zero in a different fashion than the fHS method. 

  We also considered a penalized spline procedure for the performance comparisons. The object function of the penalized likelihood can be expressed as $\norm{Y-\Phi\beta}^2_2 + \lambda \beta^T \Sigma \beta$,  where $\Sigma$ is a $k_n$ by $k_n$ matrix with $\Sigma_{jk} = \int \phi_j^{''}(t)\phi_k^{''}(t)dt$ for $j,k=1,\dots,k_n$. The smoothness parameter $\lambda$ was chosen by generalized cross-validation \citep{golub1979generalized}.  
\begin{sidewaystable*}
	
\centering
\resizebox{19cm}{!}{%

\begin{tabular}{|c|cccc|cccc|cccc|}
\hline
Truth&  \multicolumn{4}{|c|}{Linear} & \multicolumn{4}{|c|}{Quadratic} & \multicolumn{4}{|c|}{Sine} \\
  \hline
 $n=200$ & $k_n = 5$ & $k_n = 8$ & $k_n = 11$ & $k_n=35$ & $k_n = 5$ & $k_n = 8$ & $k_n = 11$ & $k_n=35$ &$k_n = 5$ & $k_n = 8$ & $k_n = 11$ & $k_n=35$ \\ 
  \hline
Oracle & 0.918(0.08) &  &  &  & 2.961(0.13) & 3.593(0.16) & 5.052(0.20) & 17.088(0.35) & 2.602(0.13) & 3.568(0.16) & 5.049(0.20) & 17.088(0.35) \\ 
  PenSpline & 1.563(0.11)  & 2.536(0.13) & 3.555(0.29) & 12.939(0.29) & {\bf2.187}(0.13) & {\bf2.563}(0.13) & {\bf3.557}(0.15) & 12.983(0.29) & 3.344(0.13) & {\bf2.618}(0.13) & {\bf3.660}(0.15) & 13.745(0.29) \\ 
  HS & 1.233(0.09) & 1.591(0.10) & 2.030(0.11) & 5.436(0.19) & 3.243(0.13) & 3.278(0.15) & 4.646(0.18) & {\bf11.203}(0.26) & {\bf2.191}(0.12) & 3.280(0.15) & 3.968(0.16) & {\bf8.847}(0.23) \\ 
  fHS1 & 1.109(0.08) & 0.934(0.08) & 0.926(0.08) & 0.922(0.08) & 4.237(0.11) & 3.627(0.17) & 5.031(0.21) & 15.162(0.35) & 2.701(0.13) & 3.640(0.16) & 5.017(0.21) & 16.579(0.91) \\ 
  fHS2 & {\bf1.101}(0.08) & {\bf0.933}(0.08) & {\bf0.925}(0.08) & {\bf0.922}(0.08) & 5.116(0.11) & 3.627(0.17) & 5.031(0.21) & 15.162(0.35) & 2.702(0.13) & 3.641(0.16) & 5.018(0.21) & 16.579(0.91) \\ 
  fHS3 & {\bf1.101}(0.08) & {\bf0.933}(0.08) & {\bf0.925}(0.08) & {\bf0.922}(0.08) & 5.381(0.15) & 3.627(0.17) & 5.031(0.21) & 15.162(0.35) & 2.702(0.13) & 3.641(0.16) & 5.018(0.21) & 16.579(0.91) \\ 
\hline
  $n=500$ & $k_n = 5$ & $k_n = 8$ & $k_n = 11$ & $k_n=35$ & $k_n = 5$ & $k_n = 8$ & $k_n = 11$ & $k_n=35$ &$k_n = 5$ & $k_n = 8$ & $k_n = 11$ & $k_n=35$ \\ 
  \hline
  Oracle & 0.425(0.04) &  &  &  & 1.626(0.06) & 1.559(0.07) & 2.136(0.09) & 6.836(0.16) & 1.237(0.06) & 1.535(0.07) & 2.133(0.09) & 6.836(0.16) \\ 
  PenSpline & 0.661(0.05) & 1.081(0.06) & 1.510(0.07) & 5.071(0.13) & {\bf1.072}(0.05) & {\bf1.109}(0.06) & {\bf1.514}(0.07) & {\bf5.232}(0.13) & 2.024(0.09) & {\bf1.231}(0.06) & {\bf1.607}(0.06) & 5.071(0.13) \\ 
  HS & {\bf0.573}(0.04) & 0.771(0.06) & 0.995(0.07) & 2.663(0.13) & 1.614(0.05) & 1.456(0.06) & 2.176(0.07) & 5.569(0.13) & {\bf0.921}(0.05) & 1.399(0.07) & 1.864(0.08) & {\bf4.514}(0.12) \\ 
  fHS1 & 0.578(0.04) & {\bf0.442}(0.04) & {\bf0.432}(0.04) & {\bf0.429}(0.04) & 1.627(0.05) & 1.551(0.07) & 2.114(0.09) & 6.463(0.15) & 1.230(0.06) & 1.499(0.07) & 2.055(0.09) & 5.915(0.14) \\ 
  fHS2 & 0.576(0.04) & {\bf0.442}(0.04) & {\bf0.432}(0.04) & {\bf0.429}(0.04) & 1.627(0.05) & 1.551(0.07) & 2.114(0.09) & 6.463(0.15) & 1.230(0.06) & 1.499(0.07) & 2.055(0.09) & 5.915(0.14) \\ 
  fHS3 & 0.576(0.04) & {\bf0.442}(0.04) & {\bf0.432}(0.04) & {\bf0.429}(0.04) & 1.627(0.05) & 1.551(0.07) & 2.114(0.09) & 6.463(0.15) & 1.230(0.06) & 1.499(0.07) & 2.055(0.09) & 5.915(0.14) \\ 
   \hline
\end{tabular}
}
\caption{ The results for the simple regression models. The smallest MSE is in bold for each $k_n$, except for the partial oracle estimator (``Oracle"). ``fHS1", ``fHS2", and ``fHS3" are the procedures based on the fHS prior with $b=\exp(-k_n\log n /10),\exp(-k_n\log n /4),$ and $\exp(-k_n\log n /2)$, respectively. }
\label{tab:simple}
\end{sidewaystable*}
  

For each prior, we used the posterior mean $\hat{f}$ as a point estimate for $f$, and reported the empirical {\it Mean Square Error} (MSE), i.e. $\norm{\widehat f - f}_{n,2}^2$. We also compare our approach to a partial oracle estimator enabled with the knnowledge of the functional form (parameteric or nonparametric) of the true function. When the true function has a parametric form, the partial oracle estimator is equivalent to the parametric estimator; otherwise, the partial oracle estimator is equivalent to the standard B-spline estimator.  

  
 Tables \ref{tab:simple} lists the MSE of the posterior mean estimator over 100 replicates in estimating the unknown function $f$ for the simple regression model with sample sizes of $n = 200$ and $500$. When the true function $f$ belongs to the nominal parametric class, the posterior mean function resulting from the fHS prior outperforms the HS prior. When the true  function does not belong to the class of the parametric functions,  the fHS prior performs comparably to the partial oracle estimator. 
 
 The penalized spline method  and the procedure based on the standard HS prior show smaller estimation error than that of the fHS prior and the partial oracle estimator (the standard B-spline estimator). This is because the penalized spline estimator  regularizes the smoothness of the function. In contrast,  the fHS prior produces a fitted function that is almost identical to the standard B-spline estimator in the nonlinear case. The shrinkage effect of the fHS prior towards a parametric function is only activated when the shape of the function fits the pre-specified parametric form. Thus, when the parametric model is true the fHS estimator behaves like the parametric estimator. If not, it behaves like the B-spline estimator.  

\section{Simulation studies for additive models} \label{sim:additive} 
Our regression examples in the previous subsection involved one predictor variable. In the case of multiple predictors, a popular modeling framework is the class of additive models  \citep{hastie1986generalized}, where the unknown function relating $p$ candidate predictors to a univariate response is modeled as the sum of $p$ univariate functions, with the $j$th function only dependent on the $j$-th predictor. In this section, we apply the fHS prior to additive models and compare results obtained under this prior to several alternative methods. To be consistent with our previous notation, we express additive models as  
\begin{eqnarray}\label{eq:additive}
Y = \sum_{j=1}^p F_j + \epsilon,
\end{eqnarray}
where $F_j = \{f_j(x_{1j}),\dots,f_j(x_{nj})  \}$ for $j=1,\dots,p$, and $\epsilon\sim\Gauss(0,\sigma^2\mr{I}_n)$. We let $\Phi_j$ denote the spline basis matrix for $X_j$ and let $\beta_j=\{ \beta_{j1},\dots,\beta_{jk_n} \}$ denote the corresponding coefficient. In general, each component function can be modeled nonparametrically. For example, using the B-spline basis functions as described in the previous section, $f_j(x) = \sum_{l=1}^{k_n}\beta_{jl}\phi_{l}(x)$,  so that $F_j=\Phi_j\beta_j$ for $j=1,\dots,p$. However, if there are many candidate predictors, then nonparametrically estimating $p$ functions may be statistically difficult and may result in a loss of precision and overfitting if only a small subset of the variables are significant. With this motivation, we extend the fHS framework to additive models, where we assign independent fHS priors to the $f_j$'s to facilitate shrinkage of each of these functions towards the class of   pre-specified parametric functions. For $\beta=\{\beta_1,\dots,\beta_p\}$, where $\beta_j\in\mathbb{R}^{k_n}$ for $j\in 1,\dots,p$, the resulting prior density  can be expressed as the product of independent fHS prior densities as follows: 
 \begin{eqnarray}\label{eq:additive_fHS}
\pi(\beta\mid \tau^2, \sigma^2 )&\propto& \prod_{j=1}^p\tau_j^{k_n-d_0}\exp\left\{  -\frac{\beta_j^\T\Phi_j^\T(\mr{I} - Q_{0j})\Phi_j\beta_j}{2\sigma^2\tau_{j}^2}\right\}\\
 \pi(\tau) &\propto& \prod_{j=1}^p\frac{(\tau_j^2)^{b-1/2 }}{(1+\tau_j^2)^{(a+b)}} \ind_{(0, \infty)}(\tau_j).
 \end{eqnarray}
 Here, $\tau=\{\tau_1,\dots,\tau_p\}$. This prior imposes  shrinkage on each $\beta_j^\T\Phi_j^\T(\mr{I} - Q_{0j})\Phi_j\beta_j$ towards zero so that the resulting posterior distribution contracts towards the class of the parametric functions. In particular, when $Q_{0j}=0$ for $j=1,\dots,p$, the resulting  posterior distribution on $F_j$ concentrates on the null function when the marginal effect of $F_j$ is negligible. This property enables us to  select variables by using the thresholding procedure discussed in Section \ref{sec:modelsel}. 
 

 For shrinkage across many variables, the classical HS prior includes a global shrinkage parameter common to all variables. In the present context, the role of the global shrinkage parameter is implicitly replaced by the scale parameter $b$. We treat $b$ as a fixed hyperparameter in the sequel and follow the default recommendation from the earlier section regarding its choice.
  
  For the univariate examples considered in the previous section, standard Bayesian model selection procedures based on the mixture of point mass priors \citep{choi2009note, choi2015note, choi2015partially}  can also be applied. These approaches have advantages in interpreting the results of model selection and Bayesian model averaging \citep{raftery1997bayesian}. However, when multiple functions are considered in model selection, standard procedures with discrete mixture priors can be computationally demanding in searching the discrete space of models.

\subsection{A comparison to the standard horseshoe prior}\label{sec:HS}
 Under the additive model,  one can impose a product of standard HS  priors \citep{carvalho2010horseshoe} on the spline coefficients to impose shrinkage towards the null function. The hierarchical structure of such an HS prior can be expressed as  
 \begin{eqnarray}\label{eq:HS_fHS}\nonumber
 \pi(\beta\mid \lambda, \psi, \sigma^2 )\propto \exp\left\{  -\frac{1}{\sigma^2\lambda^2} \sum_{j=1}^p\sum_{l=1}^{k_n}  \frac{\beta_{jl}^2}{\psi_{jl}^2}\right\}, \  \lambda \sim C^+(0,1), \ 
 \psi_{jl} \sim C^+(0,1),
 \end{eqnarray}
where $C^+(0,1)$ is the half-Cauchy distribution. 
The parameter $\lambda$ serves a global shrinkage parameter controlling the concentration near zero, while the  $\psi_{jl}$'s are local shrinkage parameters that control the tail heaviness of the individual coefficients \citep{polson2010}. The use of the standard HS prior  imposes strong shrinkage effects towards zero on each coefficient. But unlike the proposed fHS prior, the HS prior does not account for the grouping structure in the spline expansions of the components. We illustrate the importance of accounting for the group structure through a number of simulated and real examples next. We found that the fHS prior outperforms the vanilla HS prior. An analogy may be drawn to the superior performance of group lasso \citep{yuan2006model} over ordinary lasso when a similar group structure is present in the spline coefficients \citep{huang2010variable}. 

%
%

 It is not immediately clear how to select variables in an additive model by using the standard HS prior.  On the other hand, the thresholding procedure based on the fHS prior in \eqref{eq:additive_fHS} performs a natural model selection in this setting.

\subsection{Simulation scenarios}\label{sim:additive} 
For additive models, \cite{ravikumar2009sparse} proposed penalized likelihood procedures called {\it Sparse Additive Models} (SpAM) that combine ideas from model selection and additive nonparametric regression. The penalty term of SpAM can be described as  a weighted group Lasso penalty \citep{yuan2006model} in which  the coefficients for each component function $f_j$ for $j=1,\dots,p$ are forced to simultaneously shrink towards zero. \cite{meier2009high}  proposed the {\it High-dimensional Generalized Additive Model} (HGAM) that  differs from SpAM by its penalty term, which imposes both shrinkage towards zero and  regularization on the smoothness of the function. \cite{huang2010variable} introduced a two step procedure called  adaptive group Lasso (AdapGL) for additive models. The first step  estimates the weight of the group penalty, and the second  applies it to the adaptive group lasso penalty. Since the performance of penalized likelihood methods is sensitive to the choice of the tuning parameter, in the simulation studies that follow we considered  two criteria for tuning parameter selection: AIC and BIC. \texttt{R} packages \texttt{SAM}, \texttt{hgam}, and \texttt{grplasso} were used to implement SpAM, HGAM, and AdapGL, respectively. We also considered the standard HS prior and its computation was implemented by the \texttt{R} package \texttt{horseshoe}. We develop a blocked Gibbs sampler to fit the fHS procedure; the details are provided in Section E of the supplemental document. We observed good mixing and convergence of the algorithm developed based on examination of trace plots; see Section F of the supplemental document for representative examples. For the fHS prior and HS prior, we imposed a prior on $\sigma^2$ proportional to $1/\sigma^2$. We utilized 20,000 samples from the MCMC algorithms after 10,000 burn-in iterations to estimate the posterior mean.  
   
 We define the signal-to-noise ratio as $\mbox{SNR} =\Var(f(X))/Var(\epsilon)$, where $f$ is the true underlying regression function, i.e., $f=\sum_{j=1}^p f_j$, where $f_j$ is the true component function for $j=1,\dots,p$. We examine the same simulation scenarios that were considered in \cite{meier2009high} as follows:\\
\noindent {\bf Scenario 1:} ($p=200$, $\mbox{SNR}\approx 15$). This is the same scenario as Example 1 in \cite{meier2009high}.  A similar scenario was also considered in  
\cite{hardle2012nonparametric} and \cite{ravikumar2009sparse}. The true model is
$
Y_i = f_1(x_{i1}) + f_2(x_{i2}) + f_3(x_{i3}) + f_4(x_{i4}) + \epsilon_i,
$
where $\epsilon_i\overset{i.i.d}\sim\Gauss(0,1)$ for $i=1,\dots,n$, with $f_1(x) = -\sin(2x)$, $f_2(x) = x^2-25/12$, $f_3(x) = x$, $f_4(x) = \exp\{-x\} -2/5\cdot\sinh(5/2)$.
The covariates are independently generated from a uniform distribution between $-2.5$ to $2.5$. \\
\noindent {\bf Scenario 2:} ($p=80$, $\mbox{SNR}\approx 7.9$). This is equivalent to Example 3 in \cite{meier2009high} and similar to an example in \cite{lin2006component}. The true model is
$
Y_i = 5f_1(x_{i1}) + 3f_2(x_{i2}) + 4f_3(x_{i3}) + 6f_4(x_{i4}) + \epsilon_i,
$
where $\epsilon_i\overset{i.i.d}\sim\Gauss(0,1.74)$ for $i=1,\dots,n$, with
$f_1(x) = x$, $f_2(x) = (2x-1)^2$, $f_3(x) = \frac{\sin(2\pi x)}{2-\sin(2\pi x)}$, $f_4(x) = 0.1\sin(2\pi x) + 0.2\cos(2\pi x) + 0.3 \sin^2(2\pi x)+0.4 \cos^3(2\pi x) + 0.5 \sin^3(2\pi x)$.
 The covariate $\bx_j=(x_{1j},\dots,x_{nj})^\T$ for $j=1,\dots,p$ is generated by $
 \bx_j = (W_j + U)/2$, where $W_1,\dots,W_p$ and $U$ are independently simulated from U$(0, 1)$ distributions. \\
\noindent {\bf Scenario 3} ($p=60$, $\mbox{SNR}\approx 11.25$). This scenario is equivalent to Example 4 in 
  \cite{meier2009high}, and a similar example was also considered in \cite{lin2006component}. The same functions  and the same process to generate the covariates used in {\it Scenario 2} were used in this scenario. The true model is
$Y_i = f_1(x_{i1}) + f_2(x_{i2}) + f_3(x_{i3}) + f_4(x_{i4})  +1.5 f_1(x_{i5}) + 1.5 f_2(x_{i6}) + 1.5 f_3(x_{i7})  
 + 1.5 f_4(x_{i8}) + 2.5 f_1(x_{i9}) + 2.5 f_2(x_{i10}) + 2.5 f_3(x_{i11}) + 2.5 f_4(x_{i12}) +  \epsilon_i
$, where $\epsilon_i\overset{i.i.d}\sim\Gauss(0,0.5184)$ for $i=1,\dots,n$.  
  
  To evaluate the estimation performance of the fHS prior, we report the MSE for each method. To measure the performance of variable selection, we examined the proportion of times the true model was selected, as well as the {\it Matthews correlation coefficient} (MCC; \cite{matthews1975comparison}), defined as, 
  \be
  \mbox{MCC} = \frac{\mbox{TP}\cdot \mbox{TN} - \mbox{FP}\cdot \mbox{FN}}{(\mbox{TP}+\mbox{FP})(\mbox{TP}+\mbox{FN})(\mbox{TN}+\mbox{FP})(\mbox{TN}+\mbox{FN})},
  \ee
 where TP, TN, FP, and FN denote the number of true positive, true negatives, false positives, and false negatives, respectively. MCC is generally regarded as a balanced measure of the performance of classification methods, which simultaneously takes into account TP, TN, FP, and FN. We note that MCC is bounded by 1, and the closer MCC is to 1, the better the model selection performance is.  
  
\begin{sidewaystable}
\centering
\resizebox{19cm}{!}{%
\begin{tabular}{|c|ccc|ccc|ccc|ccc|}
   \hline
  &  \multicolumn{3}{|c|}{$k_n = 5$} & \multicolumn{3}{|c|}{$k_n = 8$} & \multicolumn{3}{|c|}{$k_n = 11$} &\multicolumn{3}{|c|}{$k_n = 35$} \\
  \hline
 & MSE & MCC & PT & MSE & MCC & PT & MSE & MCC & PT & MSE & MCC & PT \\ 
  \hline
Oracle ($n=300$) & 0.071(0.002) &  &  & 0.108(0.003) &  &  & 0.148(0.004) &  &  &0.452(0.006) & & \\ 
  HS & 0.185(0.005) &  &  & 0.562(0.076) &  &  & 2.646(0.114) &  &  & 0.809(0.007)  &  &\\ 
  fHS1 & 0.241(0.005) & 0.971(0.006) & 0.80 & 0.245(0.005) & 0.979(0.005) & 0.85 & {\bf0.298}(0.005) & 0.986(0.004) & 0.89 & 0.758(0.048) & 0.882(0.018) & 0.47\\ 
  fHS2 & 0.173(0.004) & 0.979(0.006) & {\bf0.86} & {\bf0.234}(0.004) & {\bf0.983}(0.006) & {\bf0.90} & 0.301(0.006) & 0.976(0.006) & 0.83 & {\bf 0.653}(0.012) & {\bf 0.937}(0.008) & 0.50\\ 
  fHS3 & 0.{\bf171}(0.004) & {\bf0.982}(0.005) & {\bf0.86} & 0.243(0.008) & {\bf0.983}(0.006) & 0.89 & {\bf0.298}(0.005) & {0.990}(0.003) & {0.92} & 0.708(0.024) & 0.918(0.011) & 0.50\\ 
  SpAM (AIC) & 0.992(0.089) & 0.897(0.097) & 0.37 & 0.360(0.007) & 0.679(0.010) & 0.00 & 0.394(0.007) & 0.514(0.008) & 0.00 & 2.12(0.057) & 0.310(0.005) & 0.00\\ 
  SpAM (BIC) & 1.286(0.093) & 0.932(0.009) & 0.54 & 1.899(0.072) & 0.984(0.004) & 0.87 & 2.051(0.060) & {\bf0.996}(0.002) & {\bf0.96} & 5.218(0.177) & 0.900(0.006) & 0.26\\ 
  HGAM (AIC) & 0.983(0.051) & 0.969(0.006) & 0.77 & 1.425(0.050) & 0.925(0.007) & 0.45 & 1.478(0.074) & 0.898(0.006) & 0.26 & 1.554(0.057) & 0.863(0.003) & 0.00\\ 
  HGAM (BIC) & 3.814(0.106) & 0.855(0.005) & 0.02 & 3.566(0.081) & 0.852(0.003) & 0.02 & 3.309(0.088) & 0.841(0.007) & 0.01 & 5.690(0.192) & 0.650(0.014) & 0.00\\ 
  AdapGL (AIC) & 0.197(0.005) & 0.343(0.006) & 0.00 & 0.277(0.006) & 0.280(0.003) & 0.00 & 0.352(0.006) & 0.258(0.003) & 0.00 & 0.706(0.007) & 0.326(0.001) & 0.00\\ 
  AdapGL (BIC) & 0.211(0.005) & 0.480(0.003) & 0.00 & 0.321(0.007) & 0.555(0.004) & 0.00 & 0.435(0.008) & 0.614(0.004) & 0.00 & 1.748(0.029) & 0.863(0.003) & 0.16\\ 
   \hline
Oracle ($n=600$)& 0.037(0.001) &  &  & 0.057(0.002) &  &  & 0.078(0.002) &  &  &  0.246(0.004) &  & \\ 
  HS & 0.072(0.002) &  &  & 0.133(0.002) &  &  & 0.191(0.004) &  &  & 0.682(0.005) &  &\\ 
  fHS1 & 0.092(0.002) & {0.984}(0.005) & {0.88} & 0.110(0.002) & {\bf0.986}(0.004) & 0.89 & 0.216(0.016) & 0.950(0.008) & 0.68 & {\bf 0.399}(0.004) & {\bf 0.996}(0.002) & 0.97\\ 
  fHS2 & 0.074(0.002) & {0.984}(0.004) & {0.88} & 0.108(0.003) & {\bf0.986}(0.005) & {\bf0.91} & 0.149(0.007) & 0.977(0.005) & 0.82 & 0.545(0.083) & 0.991(0.005) & 0.97\\ 
  fHS3 & {\bf0.073}(0.002) & {0.984}(0.004) & 0.87 & {\bf0.107}(0.002) & 0.985(0.005) & 0.89 & {\bf0.141}(0.003) & {0.983}(0.004) & {0.86} & 0.447(0.029) & 0.995(0.002) & 0.97 \\ 
  SpAM (AIC) & 1.080(0.092) & 0.927(0.008) & 0.50 & 0.228(0.029) & 0.720(0.010) & 0.02 & 0.207(0.004) & 0.532(0.008) & 0.00 & 1.011(0.025) & 0.302(0.009) & 0.00\\ 
  SpAM (BIC) & 1.105(0.093) & 0.929(0.008) & 0.51 & 1.145(0.094) & 0.928(0.008) & 0.51 & 1.783(0.065) & {\bf0.984}(0.004) & {\bf0.87} & 2.077(0.051) & 0.965(0.013) & 0.93\\ 
  HGAM (AIC) & 0.348(0.004) & {\bf1.000}(0.000) & {\bf1.00} & 0.762(0.033) & 0.868(0.003) & 0.04 & 0.989(0.047) & 0.861(0.002) & 0.00 & 1.376(0.089) & 0.826(0.007) & 0.00 \\ 
  HGAM (BIC) & 3.383(0.034) & 0.864(0.005) & 0.00 & 3.096(0.026) & 0.851(0.004) & 0.00 & 2.954(0.029) & 0.806(0.008) & 0.00 & 2.518(0.039) & 0.752(0.007) & 0.00\\ 
  AdapGL (AIC) & 0.129(0.093) & 0.693(0.008) & 0.00 & 0.152(0.003) & 0.457(0.007) & 0.00 & 0.183(0.003) & 0.342(0.005) & 0.00 & 0.428(0.004) & 0.234(0.001) & 0.00\\ 
  AdapGL (BIC) & 0.129(0.003) & 0.694(0.011) & 0.00 & 0.167(0.003) & 0.584(0.004) & 0.00 & 0.220(0.004) & 0.631(0.004) & 0.00 & 0.622(0.007) & 0.809(0.006) & 0.00\\ 
   \hline
\end{tabular}
}
\caption{Scenario 1. fHS1, fHS2, and fHS3 are the procedures based on the fHS prior with $b=\exp(-k_n\log n /10)$, $\exp(-k_n\log n /4)$, and $\exp(-k_n\log n /2)$, respectively. ``PT" is the proportion of times that each procedure selected the true model. The smallest MSE, and the largest MCC and PT are noted in bold for  each $k_n$. The smallest MSE and the largest MCC, except for the oracle estimator, are in bold. }   \label{tab:ex1}
\end{sidewaystable}


 We used the fHS prior in \eqref{eq:additive_fHS} with $Q_{0j}=0$ for all $j$. This setting of the fHS prior imposed a shrinkage effect towards the null function so that the posterior distribution of most component functions contracts towards zero. For model selection using the fHS prior, we selected variables with $E(\omega_j\mid Y)<1/2$ as described in Section \ref{sec:modelsel}, where $\omega_j = 1/(1+\tau_j^2)$ is the shrinkage coefficient for the $j$-th variable. To investigate the performance achieved by the proposed method, we compared it to a ``partial oracle estimator". The partial oracle estimator refers to the B-spline least squares estimator when the variables in the true model are given, but the true component functions in the additive model are not provided.

Results from simulation studies to compare these methods are depicted in Table \ref{tab:ex1} -- \ref{tab:ex3}. In most settings, the procedure based on the fHS prior has smaller MSE than the estimator based on the HS prior and the penalized likelihood estimators. These results hold consistently with different hyperparameters ($b =\exp(-k_n\log n /A)$ where $k_n\in\{5,8,11,35\}$ and $A\in\{2,4,10,35\}$). The SpAM with the tuning parameter chosen by BIC performs comparable to the  fHS procedure in some settings; for example, {\it Scenario 1}  with $k_n=11$, and {\it Scenario 2}  with $k_n=5$ and $k_n=11$. However, its estimation performance  is clearly inferior to the fHS procedure. The MSE of SpAM with BIC is at least two times larger than the estimator based on the fHS prior in all simulation scenarios. 

While the HS prior shows comparable estimation performance to the procedure based on the fHS prior in {\it Scenario 3},  its MSE is unstable and sensitive to the choice of $k_n$ in {\it Scenario 1} and {\it Scenario 2}. In particular, in {\it Scenario 1} with $k_n=11$,  
    the MSE of the HS prior is almost 9 times larger than the MSE of the fHS prior. When the number of basis function is chosen to be relatively large ($k_n=35$), the MSE of the fHS procedures with three different hyperparameters is uniformly smaller than that of the HS prior through all considered scenarios. In addition, as we have already discussed, model selection with the standard HS prior is not immediate in the present context.
    

\afterpage{%
    \clearpage
\begin{sidewaystable}
\centering
\resizebox{19cm}{!}{%
\begin{tabular}{|c|ccc|ccc|ccc|ccc|}
 \hline
  &  \multicolumn{3}{|c|}{$k_n = 5$} & \multicolumn{3}{|c|}{$k_n = 8$} & \multicolumn{3}{|c|}{$k_n = 11$} & \multicolumn{3}{|c|}{$k_n = 35$}\\
  \hline
 & MSE & MCC & PT & MSE & MCC & PT & MSE & MCC & PT &  MSE & MCC & PT\\ 
  \hline
Oracle ($n=300$) & 0.647(0.010) &  &  & 0.206(0.005) &  &  &  &  & & 0.812(0.006) &  &  \\ 
  HS & 0.730(0.011) &  &  & 0.473(0.010) &  &  & 0.604(0.013) &  & & 1.168(0.012) &  &\\ 
  fHS1 & {\bf0.654}(0.010) & 0.956(0.0070 & 0.68 & 0.293(0.006) & {\bf0.987}(0.004) & {\bf0.90} & 0.362(0.007) & 0.968(0.008) & 0.81 & {\bf 0.859}(0.025) & {\bf 0.861}(0.012) & 0.30\\ 
  fHS2 & 0.679(0.011) & {0.961}(0.006) & {0.72} & {\bf0.291}(0.006) & {\bf0.987}(0.004) & {\bf0.90} & {\bf0.359}(0.007) & {\bf0.980}(0.005) & {\bf0.86} & 0.938(0.048) & 0.839(0.014)  & 0.33\\ 
  fHS3 & 0.677(0.010) & 0.960(0.006) & 0.71 & 0.292(0.006) & 0.982(0.005) & 0.88 & 0.364(0.008) & 0.972(0.007) & 0.84 & 0.891(0.048) & 0.839(0.014) & 0.22\\ 
    SpAM (AIC) & 0.788(0.029) & 0.783(0.015) & 0.21 & 0.733(0.029) & 0.615(0.022) & 0.15 & 0.666(0.019) & 0.351(0.011) & 0.01 & 2.021(0.057) & 0.257(0.002) & 0.00\\ 
  SpAM (BIC) & 1.315(0.020) & {\bf0.985}(0.005) & {\bf0.89} & 1.479(0.092) & 0.964(0.007) & 0.76 & 2.535(0.207) & 0.865(0.018) & 0.53 & 5.901(0.086) & 0.700(0.002) & 0.00\\ 
  HGAM (AIC) & 0.786(0.009) & 0.922(0.007) & 0.46 & 0.457(0.008) & 0.868(0.005) & 0.10 & 0.504(0.017) & 0.810(0.008) & 0.02 & 1.197(0.052) & 0.696(0.002) & 0.00\\ 
  HGAM (BIC) & 1.802(0.037) & 0.712(0.006) & 0.03 & 1.404(0.041) & 0.717(0.006) & 0.02 & 1.601(0.078) & 0.685(0.006) & 0.00 & 6.869(0.103) & 0.490(0.003) & 0.00\\ 
  AdapGL (AIC) & 0.718(0.009) & 0.287(0.003) & 0.00 & 0.459(0.010) & 0.262(0.003) & 0.00 & 0.586(0.011) & 0.238(0.002) & 0.00 & 1.082(0.014) & 0.313(0.006) & 0.00\\ 
  AdapGL (BIC) & 0.967(0.016) & 0.571(0.004) & 0.00 & 0.636(0.016) & 0.624(0.005) & 0.00 & 0.882(0.021) & 0.694(0.007) & 0.00 & 2.105(0.026) & 0.711(0.013) & 0.00\\ 
   \hline
   
Oracle ($n=600$)& 0.622(0.007) &  &  & 0.101(0.002) &  &  & 0.127(0.003) &  &  &   0.407(0.005)&  &\\ 
  HS & 0.653(0.008) &  &  & 0.251(0.004) &  &  & 0.301(0.006) &  & & 0.881(0.007) &  &\\ 
  fHS1 & {\bf0.628}(0.008) & 0.976(0.005) & 0.83 & {\bf0.147}(0.002) & {\bf0.999}(0.001) & {\bf0.99} & {\bf0.182}(0.003) & {\bf0.997}(0.002) & {0.97} & {\bf 0.441}(0.006) & {\bf0.997}(0.002) & 0.98\\ 
  fHS2 & 0.638(0.007) & {0.982}(0.005) & {0.87} & {\bf0.147}(0.002) & {\bf0.999}(0.001) & {\bf0.99} & {0.183}(0.003) & 0.995(0.002) & 0.96 & 0.455(0.015) & 0.993(0.003)  & 0.95\\ 
  fHS3 & 0.638(0.008) & 0.977(0.005) & 0.83 & {\bf0.147}(0.002) & {\bf0.999}(0.001) & {\bf0.99} & 0.184(0.004) & 0.994(0.003) & 0.95 & 0.444(0.006) & 0.990(0.004) & 0.93\\ 
    SpAM (AIC) & 0.566(0.038) & 0.787(0.015) & 0.22 & 0.515(0.031) & 0.737(0.017) & 0.14 & 0.638(0.033) & 0.739(0.022) & 0.19 & 1.207(0.059) & 0.313(0.003) & 0.00\\ 
  SpAM (BIC) & 1.264(0.014) & {\bf1.000}(0.000) & {\bf1.00} & 1.214(0.017) & {\bf0.999}(0.001) & {\bf0.99} & 1.221(0.015) & {\bf0.997}(0.002) & {\bf0.98} & 5.564(0.148) & 0.694(0.008) & 0.65\\ 
  HGAM (AIC) & 0.953(0.032) & 0.748(0.008) & 0.01 & 0.469(0.005) & 0.854(0.004) & 0.01 & 0.327(0.004) & 0.810(0.008) & 0.00 & 0.598(0.025) & 0.698(0.002) & 0.00\\ 
  HGAM (BIC) & 1.779(0.017) & 0.698(0.004) & 0.00 & 1.474(0.018) & 0.700(0.001) & 0.00 & 1.316(0.019) & 0.701(0.002) & 0.00 & 2.971(0.048) & 0.501(0.005) & 0.00\\ 
  AdapGL (AIC) & 0.640(0.009) & 0.197(0.003) & 0.00 & 0.259(0.004) & 0.393(0.006) & 0.00 & 0.308(0.005) & 0.287(0.004) & 0.00 & 0.711(0.007) & 0.248(0.002) & 0.00\\ 
  AdapGL (BIC) & 0.800(0.009) & 0.568(0.006) & 0.00 & 0.323(0.005) & 0.615(0.004) & 0.00 & 0.427(0.007) & 0.639(0.004) & 0.00 & 1.412(0.016) & 0.877(0.009) & 0.28\\ 
   \hline
\end{tabular}
}
\caption{ Scenario 2. The description of this table is the same as Table \ref{tab:ex1}.}\label{tab:ex2}
\end{sidewaystable}
    \clearpage
}

\afterpage{%
    \clearpage
\begin{sidewaystable}
\centering
\resizebox{19cm}{!}{%
\begin{tabular}{|c|ccc|ccc|ccc|ccc|}
 \hline
  &  \multicolumn{3}{|c|}{$k_n = 5$} & \multicolumn{3}{|c|}{$k_n = 8$} & \multicolumn{3}{|c|}{$k_n = 11$} & \multicolumn{3}{|c|}{$k_n = 35$} \\
   \hline
 & MSE & MCC & PT & MSE & MCC & PT & MSE & MCC & PT & MSE & MCC & PT \\ 
  \hline
Oracle ($n=300$)& 0.214(0.003) &  &  & 0.169(0.003) &  &  & 0.230(0.003) &  &  & 0.520(0.004) &  &\\ 
  HS & {\bf0.253}(0.003) &  &  & 0.243(0.003) &  &  & {0.287}(0.003) &  &  &  0.477(0.005) &  &\\ 
  fHS1 & 0.263(0.003) & {\bf0.809}(0.006) & 0.00 & {\bf 0.237}(0.005) & {\bf0.799}(0.007) & 0.00 & {\bf0.287}(0.005) & {\bf0.754}(0.008) & 0.00  & {\bf0.417}(0.005) & {\bf0.612}(0.006) & 0.00\\ 
  fHS2 & 0.288(0.004) & 0.803(0.007) & 0.00 & 0.248(0.005) & 0.789(0.007) & 0.00 & 0.292(0.007) & 0.746(0.007) & 0.00  & 0.428(0.007) & 0.609(0.009) & 0.00\\ 
 fHS3 & 0.286(0.004) & 0.796(0.006) & 0.00 & 0.241(0.004) & 0.796(0.007) & 0.00 & {0.287}(0.005) & 0.751(0.006) & 0.00  & 0.426(0.007) & {\bf0.612}(0.008) & 0.00\\ 
   SpAM (AIC) & 0.817(0.024) & 0.777(0.006) & 0.00 & 0.718(0.021) & 0.727(0.008) & 0.00 & 0.627(0.016) & 0.632(0.010) & 0.00  & 1.918(0.031) & 0.217(0.011) & 0.00\\ 
  SpAM (BIC) & 1.423(0.053) & 0.741(0.006) & 0.00 & 1.778(0.070) & 0.701(0.008) & 0.00 & 2.730(0.095) & 0.613(0.008) & 0.00  & 4.771(0.096) & 0.402(0.007) & 0.00\\ 
  HGAM (AIC) & 0.275(0.003) & 0.601(0.006) & 0.00 & 0.217(0.003) & 0.521(0.004) & 0.00 & 0.272(0.004) & 0.485(0.005) & 0.00  & 1.257(0.037) & 0.297(0.008) & 0.00\\ 
  HGAM (BIC) & 0.903(0.045) & 0.403(0.010) & 0.00 & 1.399(0.073) & 0.304(0.013) & 0.00 & 2.145(0.122) & 0.198(0.018) & 0.00  & 5.303(0.133) & 0.019(0.008) & 0.00\\ 
  AdapGL (AIC) & 0.289(0.003) & 0.445(0.007) & 0.00 & 0.247(0.003) & 0.376(0.006) & 0.00 & {\bf0.284}(0.003) & 0.323(0.007) & 0.00  & 0.513(0.007) & 0.359(0.008) & 0.00\\ 
  AdapGL (BIC) & 0.472(0.008) & 0.669(0.008) & 0.00 & 0.492(0.007) & 0.658(0.008) & 0.00 & 0.648(0.009) & 0.675(0.007) & 0.00  & 2.552(0.033) & 0.582(0.006)  & 0.00\\ 
   \hline
  Oracle ($n=600$)& 0.176(0.002) &  &  & 0.085(0.001) &  &  & 0.115(0.001) &  &   & 0.364(0.003) &  &\\ 
  HS & 0.203(0.002) &  &  & 0.149(0.002) &  &  & 0.187(0.002) &  &   & 0.424(0.004) &  &\\ 
  fHS1 & {\bf0.201}(0.002) & {\bf0.913}(0.005) & 0.07 & {\bf0.123}(0.002) & {\bf0.934}(0.004) & {\bf0.15} & 0.153(0.004) & 0.914(0.005) & 0.08  & 0.327(0.007) & 0.752(0.005) & 0.00\\ 
  fHS2 & 0.207(0.002) & 0.904(0.005) & 0.05 & 0.125(0.002) & 0.926(0.004) & 0.12 & 0.148(0.002) & 0.916(0.005) & 0.08 & {\bf0.306}(0.005) & {\bf0.757}(0.005) & 0.00\\ 
   fHS3 & 0.207(0.002) & 0.911(0.005) & {\bf0.08} & 0.125(0.002) & 0.931(0.004) & 0.13 & {\bf0.147}(0.002) & {\bf0.919}(0.005) & {\bf0.09} & 0.312(0.005) & 0.755(0.005) & 0.00\\ 
  SpAM (AIC) & 0.542(0.019) & 0.855(0.005) & 0.00 & 0.475(0.013) & 0.845(0.006) & 0.00 & 0.475(0.015) & 0.810(0.007) & 0.00  & 0.529(0.095) & 0.234(0.005) & 0.00\\ 
  SpAM (BIC) & 0.642(0.030) & 0.849(0.005) & 0.00 & 0.720(0.039) & 0.839(0.006) & 0.00 & 1.068(0.046) & 0.793(0.007) & 0.00  & 3.883(0.174) & 0.459(0.006) & 0.00\\ 
  HGAM (AIC) & 0.332(0.007) & 0.361(0.006) & 0.00 & 0.195(0.005) & 0.455(0.006) & 0.00 & 0.154(0.002) & 0.386(0.004) & 0.00 & 0.368(0.005) & 0.248(0.007) & 0.00\\ 
  HGAM (BIC) & 0.516(0.007) & 0.288(0.006) & 0.00 & 0.481(0.019) & 0.262(0.007) & 0.00 & 0.706(0.045) & 0.187(0.014) & 0.00  & 2.603(0.014) & 0.066(0.012) & 0.00\\ 
  AdapGL (AIC) & 0.205(0.002) & 0.275(0.005) & 0.00 & 0.401(0.005) & 0.010(0.006) & 0.00 & 0.197(0.002) & 0.405(0.007) & 0.00  & 0.364(0.003) & 0.310(0.006) & 0.00\\ 
  AdapGL (BIC) & 0.294(0.004) & 0.688(0.005) & 0.00 & 0.251(0.004) & 0.729(0.006) & 0.00 & 0.365(0.006) & 0.759(0.007) & 0.00 & 1.229(0.015) & 0.741(0.004) & 0.00\\ 
   \hline
\end{tabular}
}
\caption{ Scenario 3. The description of this table is the same as Table \ref{tab:ex1}.}\label{tab:ex3}
\end{sidewaystable}
    \clearpage
}


\section{Real data analysis for sparse additive model under high-dimensional settings}

 In this section, we considered the Near Infrared (NIR) Spectroscopy data set to examine the performance of the fHS prior for sparse additive models in high-dimensional settings. This data set was previously analyzed in \cite{liebmann2009determination} and \cite{curtis2014fast}, and is  available in the \texttt{R} package \texttt{chemometrics}. The NIR data includes glucose and ethanol concentration (in g/L) for 166 alcoholic fermentation mashes of different feedstock (rye, wheat and corn). Two hundred thirty-five NIR spectroscopy absorbance values were acquired in the wavelength range of 115-2285 nanometer (nm) by a transflectance probe \citep{liebmann2009determination}. We implemented the  model selection procedure on the data values with a response variable defined by ethanol concentrations. We have $n=166$ and $p=235$; we set the training and test set sizes to be 146 and  20, respectively. For each method, the prior specification used in Section \ref{sim:additive} was applied. Results are summarized in Table \ref{tab:NIR} and show that the proposed procedure with the fHS prior achieves the smallest prediction error among the considered methods. In addition, the average model size of the fHS procedure was smaller than that selected by the other methods. Compared to other procedures, the fHS procedure shows  stable performance overall. This result typically held regardless of the choice of $b$ and $k_n$. The exception occurred when  $b = \exp(-k_n\log n/10)$ and $k_n=5$. In that case, the average model size was $26.37$,  almost double that compared to the fHS procedure with the other hyperparameter values. One remark is that when $k_n=35$, all procedures showed poor and unstable prediction performances, except for  the HGAM procedures. We think that this is because the HGAM imposes extra regularization on the smoothness of the function, unlike other procedures. So, the corresponding HGAM estimator avoids an overfitting issue caused by a relatively large $k_n$.

\begin{table}[ht]
\centering
\resizebox{16cm}{!}{%
\begin{tabular}{|c|rc|rc|rc|rc|}
  \hline
  &  \multicolumn{2}{|c|}{$k_n = 5$} & \multicolumn{2}{|c|}{$k_n = 8$} & \multicolumn{2}{|c|}{$k_n = 11$} & \multicolumn{2}{|c|}{$k_n = 35$} \\
 \hline
  & MSPE & MS & MSPE & MS & MSPE & MS & MSPE & MS\\ 
  \hline
HS & 1.542(0.14) &  & 3.604(0.72) &  & 6.724(1.01) &  & 50.673(4.81) &\\ 
  fHS1 & 1.450(0.15) & 26.37 & {\bf2.014}(0.22) & 17.27 & 3.712(0.96) & 12.99 & 810.826(42.86) & 4.80\\ 
  fHS2 & 1.637(0.16) & 13.57 & 2.052(0.27) & 15.33 & {\bf2.521}(0.46) & 12.73 & 73.996(30.40) & 4.78\\ 
  fHS3 & {\bf1.446}(0.14) & 13.78 & 2.222(0.41) & 14.39 & 2.970(0.80) & 12.45 & 27.400(4.18) & 4.50\\ 
  SpAM (AIC) & 13.977(1.38) & 38.93 & 24.707(2.36) & 27.89 & 28.683(2.71) & 15.94 & 111.218(11.12) & 2.96\\ 
  SpAM (BIC) & 49.294(6.06) & 36.54 & 65.957(7.84) & 24.32 & 60.924(7.97) & 13.86 & 146.869(14.69) & 2.78\\ 
  HGAM (AIC) & 2.036(0.13) & 39.69 & 2.286(0.24) & 33.07 & 2.776(0.29) & 34.17 & {\bf 3.911}(0.39) & 21.60\\ 
  HGAM (BIC) & 1.854(0.12) & 45.19 & 2.285(0.24) & 32.86 & 2.786(0.30) & 33.91 & 3.912(0.39) & 21.50\\ 
  AdapGL (AIC) & 19.914(3.57) & 38.40 & 47.016(8.09) & 109.93 & 57.948(8.05) & 79.80 & 75.519(8.53) & 7.80\\ 
  AdapGL (BIC) & 10.626(1.42) & 14.07 & 16.370(2.57) & 15.06 & 33.421(4.45) & 14.25 & 476.551(12.96) & 0.00\\ 
   \hline
\end{tabular}
}
\caption{ NIR data set. ``MS" indicates the average model size. The smallest MSPE is noted in bold.}
\label{tab:NIR}
\end{table}


\section{Conclusion}
  We have proposed a class of shrinkage priors which we call the fHS priors. These priors impose strong shrinkage towards a pre-specified class of functions. The shrinkage mechanism in this prior is new. It allows the nonparametric function to shrink towards a parametric function without performing selection or shrinkage on the basis coefficients towards zero. By doing so, it preserves the minimax optimal parametric rate of posterior convergence $n^{-1/2}$ when the true underlying function is parametric. It also comes within $O(\log n)$ of achieving the minimax nonparametric rate  when the true function is strictly separated from the class of parametric functions. We also investigated the asymptotic properties of model selection procedure by thresholding  the posterior mean of $\omega$. The resulting model selection procedure consistently selects the true form of the regression function  as $n$ increases.

   The fHS prior imposes shrinkage on the shape of the function rather than shrinking or selecting certain basis coefficients. Hence, its scope of applicability is broad and it can be applied whenever a distance function to the null subspace can be formulated. In contrast, standard selection/shrinkage priors need an explicit parameterization of the null space in terms of zero constraints on specific parameters/coefficients.

    Like other nonparametric procedures, it is important to choose an appropriate value of the hyperparameters of the fHS prior ($k_n$ for the B-spline basis and $b$ for the hyperprior on $\omega$). In the real and simulated examples considered here, we used  multiple hyperparameters, $k_n \in \{5,8,11,35\}$ and $b = \exp(-k_n\log n /B)$ with $B \in \{10, 4, 2\}$, and compared the results with the different choice of the hyperparameters. More formal criterion to choose $k_n$ or $b$ might also be considered, and investigation of such criterion remains an active area of research.

  The novel shrinkage term contained in the proposed prior, $F^\T\IP F$, can be naturally applied to a new class of  penalized likelihood methods having a general form expressible as  $ - l(Y\mid F) + p_{\lambda}\big( F^\T\IP F \big)$, where $l(Y\mid F)$ is the logarithm of a likelihood function and $p_\lambda$ is the penalty function. In contrast to other penalized likelihood methods, this form of penalty allows shrinkage towards the space spanned by a projection matrix $Q_0$, rather than simply a zero function.

      \section*{Acknowledgment}
      All authors acknowledge support from NIH grant CA R01 158113. 
     \section*{Supplementary Material}
     The supplementary material, which is available online,  contains additional simulated and real data examples, MCMC diagnostics, and proofs of the theoretical results. In Section A in the supplementary material, a detailed description of the B-spline basis function is provided. In Section B, we examine additional simulation studies for univariate examples. These examples include the varying coefficient model and the log-density model introduced in Section \ref{sec:sim}. In Section C, we provide additional real data examples for the additive model. Section D contains the proofs of the  theoretical results. In Section E and F, the MCMC algorithm used to implement the fHS procedure is described and its convergence diagnostics is examined, respectively.    

  
\newpage
\bibliography{myReference.bib}

\end{document}